\theoremstyle{definition}
\newtheorem{defn}{\protect\definitionname}
\theoremstyle{plain}
\newtheorem{assumption}{\protect\assumptionname}
\theoremstyle{plain}
\newtheorem{lem}{\protect\lemmaname}
\theoremstyle{plain}
\newtheorem{prop}{\protect\propositionname}
\theoremstyle{definition}
 \newtheorem{example}{\protect\examplename}
\theoremstyle{plain}
\newtheorem{thm}{\protect\theoremname}
\theoremstyle{plain}
\newtheorem{cor}{\protect\corollaryname}
\definecolor{darkblue}{rgb}{0.0,0,.6}
\definecolor{maroon}{rgb}{0.68,0,0}
\definecolor{darkgreen}{rgb}{0,0.369,0.086}
\definecolor{gray}{rgb}{.5,.5,.5}
\providecommand{\assumptionname}{Assumption}
\providecommand{\corollaryname}{Corollary}
\providecommand{\definitionname}{Definition}
\providecommand{\examplename}{Example}
\providecommand{\lemmaname}{Lemma}
\providecommand{\propositionname}{Proposition}
\providecommand{\theoremname}{Theorem}
\begin{document}

\title{Indicator Choice in Pay-for-Performance}

\author{Majid Mahzoon, Ali Shourideh, Ariel Zetlin-Jones\\
Carnegie Mellon University\thanks{Emails: \textcolor{blue}{\protect\href{mailto:mailto: mmahzoon@andrew.cmu.edu}{mmahzoon@andrew.cmu.edu}},
\textcolor{blue}{\protect\href{mailto:mailto: ashourid@andrew.cmu.edu}{ashourid@andrew.cmu.edu}},
\textcolor{blue}{\protect\href{mailto:mailto: azj@andrew.cmu.edu}{azj@andrew.cmu.edu}}}}
\maketitle
\begin{abstract}
We study the classic principal-agent model when the signal observed
by the principal is chosen by the agent. We fully characterize the
optimal information structure from an agent\textquoteright s perspective
in a general moral hazard setting with limited liability. Due to endogeneity
of the contract chosen by the principal, the agent's choice of information
is non-trivial. We show that the agent's problem can be mapped into
a geometrical game between the principal and the agent in the space
of likelihood ratios. We use this representation result to show that
coarse contracts are sufficient: The agent can achieve her best with
binary signals. Additionally, we can characterize conditions under
which the agent is able to extract the entire surplus and implement
the first-best efficient allocation. Finally, we show that when effort
and performance are one-dimensional, under a general class of models,
threshold signals are optimal. Our theory can thus provide a rationale
for coarseness of contracts based on the bargaining power of the agent
in negotiations.
\end{abstract}

\section{Introduction}

The use of pay-for-performance contracting is a cornerstone of modern
employment contracts. Executive compensation is often indexed in part
to company performance metrics including growth in the company's stock
price. Employment contracts for top athletes frequently involve performance
bonuses for specific outcomes such as goals scored for soccer players
or the number of touch-downs for players in the NFL. When employment
contracts feature such performance pay incentives, employers and employees
must agree to a set of performance indicators during contract negotiations.\footnote{\citet{bebchuk2004pay} discuss the various issues with the negotiations
process between the CEOs and boards and possible issues arising from
choosing particular performance indicators, i.e., vesting stocks.
In soccer, news outlets often describe the process in which players
and soccer clubs agree on what performance measure to use. See for
example \href{https://www.dailymail.co.uk/sport/sportsnews/article-5718227/Premier-League-stars-crazy-contracts-revealed-Alexis-Sanchez-Pierre-Emerick-Aubameyang.html}{this article in Daily Mail}
which describes several soccer players in the Premier League negotiating
over the relevant performance indicator as a base for pay.} Given this observation, what are the incentives of employees and
employers to negotiate on performance indicators? If the employees
have a role in choosing the performance metrics, what metrics would
they choose? Finally, how does the choice of indicators interact with
the ultimate productive efficiency of the firm?

In this paper, we answer these questions by considering the problem
of indicator design in the textbook moral hazard problem with limited
liability. More specifically, we consider the standard principal-agent
problem of \citet{Holmstrom1979} in which an agent has quasi-linear
preferences and must be paid a non-negative wage. The \emph{performance
technology} is one that maps costly effort, $e$, by the agent into
a distribution of some performance measure $x$. Before the principal
offers a compensation contract, the agent chooses an \emph{indicator},
a possibly random signal $s$ of $x$ where the principal can only
offer a contract that is contingent on the indicator, $s$. Once the
indicator is chosen, the principal and the agent play the textbook
moral hazard game.\footnote{Throughout the paper, we assume that the agent commits to the indicator
$s$ while she cannot commit to the effort level $e$. A justification
for this assumption is that employment contracts are often enforced
by courts and thus hard to break.}

In this environment, one might conjecture that more information would
lead to more efficient outcomes. While this is true under certain
circumstances -- see the informativeness principle of \citet{Holmstrom1979},
and its extension by \citet{Chaigneauetal2019} -- information can
often be detrimental to the agent. To see the intuition for this observation,
suppose that the performance technology is degenerate at $x=e$ and
thus the agent can use her effort as the indicator. Then, revealing
all information by choosing $s=x=e$ would give the principal the
ability to fully capture all the surplus generated by the effort.
On the other hand, choosing a fully uninformative indicator leads
to no surplus for the agent as the principal is unable to incentivize
the agent to contribute effort when signals are fully uninformative.
This points to a trade-off in the problem of indicator design by the
agent.

In this model, we show three main results: first, we provide a geometric
interpretation of the indicator design game in the space of likelihoods;
the probability of a particular performance/signal realization for
an arbitrary effort relative to the effort that the agent would like
to implement. Under this interpretation, the agent's indicator design
problem is equivalent to a geometric game where the agent chooses
a convex set (of likelihood ratios) and the principal chooses a point
within that set. Our geometric interpretation provides a tractable
formulation to understand how the agent's choice of indicator influences
the resulting compensation scheme offered by the principal. Second,
we provide conditions under which the agent is able to choose the
indicator in such a way to implement the first-best efficient effort
and capture all the surplus created by her effort. Finally, we consider
a specific case where performance measure $x$ has a continuous distribution
in real numbers and show that under certain conditions optimal indicator
structure takes the form of monotone or hump-shaped thresholds signals.

The reason that the agent is sometimes able to capture all of the
surplus and implement the efficient outcome can be easily understood
when performance technology is fully informative, i.e., $x=e$. In
this case, suppose the agent chooses an indicator with two realizations:
high and low. By choosing the probability of high and low indicators
appropriately for every $x=e$, the agent is able to force the principal's
hand. Note that if the principal wants to implement a given effort
level $\hat{e}$, he must compensate the agent following a realization
of the high signal. The amount of compensation the principal must
deliver is increasing as the cost of the desired effort level $\hat{e}$
rises relative to the cost of any other effort and is increasing in
the likelihood of observing a high signal from some other effort relative
to that of observing a high signal from effort $\hat{e}$. By choosing
the signal probabilities appropriately, the agent is then able to
make the off-path likelihood of the high signal sufficiently large
so that the principal must give up all the surplus in order to implement
the desired effort level $\hat{e}$.

In general, our geometric interpretation allows us to show that the
agent can implement any desired effort level with a ``coarse'' information
structure that has at most two signal realizations. Additionally,
this interpretation allows us to provide conditions under which even
when the performance technology is stochastic, it is possible for
the agent to capture all the surplus and implement the first-best
outcome. Our sufficient condition amounts to checking whether a particular
point in the likelihood space belongs to convex hull of the likelihood
functions implied by the performance technology.  This type of sufficient
condition is easy to evaluate using existing convex hull algorithms.

Beyond our technical contributions, our results shed light on the
debate on the efficiency of incentive contracts in executive compensation.
\citet{bebchuk2004pay} have argued that the standard agency model
of shareholder maximization is at odds with the data since negotiations
often happen between the CEO and the board whose incentives are not
necessarily aligned with that of the shareholders -- in fact they
claim that CEOs and compensation committees often trade favors at
a cost to shareholders. In our model, and consistent with \citet{bebchuk2004pay}'s
interpretation, we endow the agent with full bargaining power over
the choice of performance pay indicators. While the agent captures
all of the gains from trade under this assumption, her choices also
maximize total surplus. In contrast, standard models of moral hazard
with exogenous performance technology often feature inefficiencies
in the sense that providing incentives to the agent often entails
reductions in total surplus. In this sense, we find that optimal design
of performance pay indicators may help to reduce inefficiencies within
the firm. Further investigation of the data on negotiations and choice
of indicators would be a good test of our theory.

\subsection{Related Literature}

Our paper is related to several strands of the literature on contracting
and information design.

With respect to the moral hazard literature, our key innovation is
to consider the problem of choosing indicators whereupon contracts
are based on showing that this can remove all inefficiencies. In the
classical model, \citet{Innes1990} and \citet{PobleteandSpulber2012}
consider moral hazard with limited liability and a risk-neutral agent,
and show that simple debt contracts are optimal. \citet{Carroll2015}
as well as \citet{WaltonandCarroll2022} consider the moral hazard
problem with limited liability, where the principal has non-Bayesian
uncertainty about the production technology and wishes to maximize
her worst-case payoff. \footnote{It is needless to say that a rather large body of work has considered
models with risk-averse agents and risk-neutral principals.}

While often information design incentives are ignored in moral hazard,
\citet{Holmstrom1979} and later \citet{Chaigneauetal2019} are exceptions.
They investigate the comparative statics of changing the performance
technology on the payoffs; by the informativeness principle, the more
informative the output is about the effort, the lower wage the principal
needs to pay.

Perhaps, the closest paper to ours is that of \citet{Garrettetal2020}.
They consider a model in which the agent can design the performance
technology and cost function -- they refer to this as technology
design -- and show that the agent-optimal design involves only binary
distributions. In contrast, in our model, the performance technology
is fixed and the agent chooses an indicator of this performance, i.e.,
an information structure to garble the principal's observations. Moreover,
our paper has a technical contribution by reformulating the problem
in terms of likelihood ratios.

\citet{GeorgiadisandSzentes2020} study moral hazard with limited
liability and a risk-averse agent where the principal continuously
observes signals about the agent's effort at a constant marginal cost.
They show that the principal-optimal information-acquisition strategy
is a two-threshold policy. \citet{Barronetal.2020} consider an agent
who can costlessly add mean-preserving noise to the output. This is
also the case in our model; however, in our setting, the noisy output
is just used for contracting purposes and does not change the principal's
payoff compared to the original output. Moreover, in our model, first
the agent chooses the information structure, then the principal offers
the contract, and finally the agent chooses her effort, whereas in
\citet{Barronetal.2020}, first the principal offers a contract and
then the agent chooses the effort and the information structure.

Another related strand of the moral hazard literature focuses on career
concerns introduced by \citet{Holmstrom1979} and changes in information
structure observed by both the principal and the agent. In this context
\citet{Holmstrom1999} shows that noisy performance signals are beneficial
for incentives. Similarly \citet{Dewatripontetal1999} show that more
informative signals in the sense of Blackwell do not necessarily increase
incentives. In contrast in our model, indicator design or information
design can be used as a tool to  reshuffle surplus between the agent
and the principal while achieving efficiency.

In the context of team production and moral hazard, \citet{halac2021rank}
show that the principal can benefit from private contract offers by
leveraging rank uncertainty: Each agent is informed only of her own
bonus and a ranking distribution; each agent's bonus makes work dominant
if higher-rank agents work. Interestingly, in our setting, it is the
agent that can use uncertainty about performance for the principal
to improve efficiency.

We also contribute to the growing literature on incentives in Bayesian
Persuasion. Several papers, including \citet{BoleslavskyandKim2018},
\citet{Rosar2017}, \citet{PerezRichetandSkreta2022}, \citet{Ball2019},
\citet{SaeediandShourideh2020}, and \citet{Zapechelnyuk2020} have
considered the effect of incentives in the Bayesian persuasion problem
where a third party designs an information structure, and a ``sender''
determines the distribution of the underlying state by exerting a
costly effort. From a technical perspective, our problem is different
from this class of models. This is partly due to the fact that for
the ex-post incentive to exert effort by the agent (after the choice
of indicator and contract), the distribution of the signals -- or
alternatively in the language of \citet{kamenica2011bayesian}, the
distribution of posteriors -- off the equilibrium path is also relevant.
By casting the problem in the space of likelihood functions -- as
opposed to beliefs as in \citet{kamenica2011bayesian} -- we can
characterize its solution using the geometric game. This technique
can be used in other information design problems in which on- and
off-path beliefs are involved.

The rest of the paper is organized as follows: Section \ref{sec:A-Simple-Example}
describes the key insight about full surplus extraction and first-best
implementation in a simple example. Section \ref{sec:Model} describes
the basic model. Section \ref{sec:A-Geometrical-Game} describes the
geometric interpretation of the indicator choice game between the
principal and the agent. Section \ref{sec:Continuous-Effort-and-1}
describes optimality of threshold signals. Section \ref{sec:Conclusion}
concludes.

\section{A Simple Example\label{sec:A-Simple-Example}}

In this section, we use a basic environment to illustrate the main
mechanisms at work in our model. Consider the basic textbook model
of moral hazard. A principal (he) is hiring an agent (she) to perform
a task whose output $x\in\left\{ 0,1\right\} $ is collected by the
principal. The agent chooses how much effort to put in to perform
the task. She can either choose the low effort $e_{L}$ or a costly
high effort $e_{H}$ whose cost is given by $c>0$. 

Choosing the high effort leads to output $x=1$ with certainty while
choosing the low effort leads to output $x=1$ with probability $p<1$
and $x=0$ with probability $1-p$. We assume that the total surplus
under high effort $1-c$ is higher than that under low effort $p$
and thus it is efficient to implement the high effort. 

In this standard principal-agent model with moral hazard, principal
observes the output but not the effort of the agent. He can compensate
the agent for each output realization but cannot make these payments
negative, i.e., he is subject to limited liability. If the principal
sets wage $w$ when output is high, then the agent's incentive compatibility
constraint is
\[
w-c\geq p\cdot w.
\]
Hence, as long as $w\geq\frac{c}{1-p}$, the principal can implement
the high effort. If the principal is to choose $w=\frac{c}{1-p}$,
then his payoff is $1-\frac{c}{1-p}>0$ while the agent's payoff is
$\frac{c}{1-p}-c=\frac{p}{1-p}c>0$. Moreover, for the principal to
prefer implementing the high effort to low, we must have that $1-\frac{c}{1-p}\geq p$
or $c\leq\left(1-p\right)^{2}$.

Now suppose that the agent can control principal's information about
the output. Specifically, suppose that before the contracting stage,
the agent can design a device that can potentially hide the output
of the project. More specifically, suppose that the agent can choose
an information structure or a Blackwell experiment that probabilistically
maps output $x\in\left\{ 0,1\right\} $ to a signal $S=\left\{ L,H\right\} $
which is observed by the principal. The principal observes the signal
$s=H$ with probability $\pi_{H}$ when $x=1$ is realized and observes
$s=H$ with probability $\pi_{L}$ if $x=0$ is realized, where $\pi_{L}<\pi_{H}$. 

Since the principal can only observe the signal designed by the agent,
he will compensate her only when $s=H$. If this compensation is $w$,
then the agent's incentive compatibility constraint is
\[
\pi_{H}\cdot w-c\geq\left(\pi_{H}p+\left(1-p\right)\pi_{L}\right)\cdot w.
\]
 Hence, the principal is able to implement high effort when
\[
w\geq\frac{c}{\left(\pi_{H}-\pi_{L}\right)\left(1-p\right)}.
\]

When minimizing the wage, the expected cost of compensating the agent
for the principal is $\pi_{H}w=\frac{c}{\left(1-\frac{\pi_{L}}{\pi_{H}}\right)\left(1-p\right)}$.
Therefore, as long as $p\leq1-\frac{c}{\left(1-\frac{\pi_{L}}{\pi_{H}}\right)\left(1-p\right)}$,
the principal finds it profitable to implement the high effort. This
inequality can be rewritten as
\[
\frac{c}{\left(1-p\right)^{2}}\leq1-\frac{\pi_{L}}{\pi_{H}}.
\]
This, in turn, implies that when $\frac{c}{\left(1-p\right)^{2}}\leq1$,
the agent can find a signal structure $\left(\pi_{L},\pi_{H}\right)$
such that the above holds with equality. Under such an information
structure, the payoff of the principal is $p$, what he can achieve
without any costly effort, and the agent captures the rest of the
surplus, $1-p-c$. In other words, giving the agent the ability to
choose an information structure enables her to guarantee the highest
value of the surplus under an efficient level of effort. Intuitively,
the change of the information structure allows the agent to induce
an arbitrary high value of the wage by increasing the likelihood $\pi_{L}/\pi_{H}$,
and capture the entire surplus.

The above example illustrates that agent's freedom to choose the information
structure, based on which she will be paid, can be extremely powerful.
A few natural questions arise: When can the agent capture the efficient
level of surplus? What information structure should be used by the
agent to achieve her desired outcome? In what follows, we provide
a characterization of the optimal signal structure for the agent as
well as her ability to extract surplus from the principal.

\section{Model\label{sec:Model}}

Our general model builds upon the textbook moral hazard problem. Consider
a principal employing an agent to perform a task whose output is represented
by $x\in X$, where $X$ is finite. The agent chooses effort $e\in E=\left\{ e_{1},\cdots,e_{m}\right\} $
to perform the task, where $E$ is finite. The agent's effort choice
induces a probability distribution $f(x|e)$ over the outcome space
$X$, where $\sum_{x}f(x|e)=1,\forall e\in E$. We refer to $f\left(\cdot|\cdot\right)$
as the performance technology. Effort is costly to the agent; the
cost of exerting effort $e$ is given by $c(e)$ for some real-valued
function $c:E\rightarrow\mathbb{R}_{+}$. 

Throughout the analysis, we assume that $e_{1}\in E$ represents the
effort with the lowest cost; for simplicity, let $c(e_{1})=0$. The
principal's payoff from realization of output $x$ is given by $g(x)$
for some real-valued function $g:X\rightarrow\mathbb{R}$. The principal
cannot observe the agent's effort and thus cannot offer a contract
contingent on the agent's effort; he can only offer contracts contingent
on observable outcomes. 

The point of departure from the textbook moral hazard model is that
the agent may influence the principal's information about the output
by choosing an information structure $(S,\pi)$. Here, $S$ is a signal
space and $\pi(\cdot|x):X\rightarrow\Delta(S)$ is a stochastic mapping
from the output space $X$ to the signal space, where $\sum_{s}\pi(s|x)=1,\ \forall x\in X$.
The principal only observes the signal $s\in S$ generated from this
information structure and can thus only offer a contract contingent
on this signal realization. Therefore, the principal's choice of contract
can be represented by a real-valued function $w:S\rightarrow\mathbb{R}_{+}$
where $w(s)$ is the wage paid to the agent when signal $s$ is realized.
One interpretation of this contractual restriction is that the task
output is not observable to the principal but the agent may verifiably
disclose information about the output. An alternative interpretation
is that the output is observable but the agent has the option to choose
the performance measure based on which she will be paid.

Note that as in the simple example, we have assumed that agent enjoys
limited liability, i.e., the contract offered to her by the principal
guarantees a non-negative wage regardless of the effort she puts in.
The principal's payoff $u_{P}$ is equal to the payoff from output
less the wages paid to the agent. The agent's payoff $u_{A}$ is equal
to the wage she receives from the principal minus the cost of her
effort. Both the principal and the agent are assumed to maximize expected
utility. Notice that principal can always implement the zero-cost
effort, i.e. $e_{1}$, by offering $w(s)=0,\ \forall s\in S$. Therefore,
his outside option is to implement $e_{1}$ and obtain $\underline{u}_{P}=\sum_{x}g(x)f(x|e_{1})$. 

The timing of the game is as follows: 
\begin{itemize}
\item Agent chooses an information structure $(S,\pi)$. To ease the exposition,
we assume the signal space $S$ is finite and simply represent the
information structure by $\pi$. \footnote{Later, we show this restriction to finite signal spaces is without
loss of generality.}
\item Observing the information structure $(S,\pi)$ chosen by the agent,
the principal offers the agent a contract $w:S\rightarrow\mathbb{R}_{+}$
contingent on the realized signal. 
\item Observing the contract $w$ offered by the principal (and the information
structure $(S,\pi)$ she has chosen), the agent chooses how much effort
$e$ to exert. 
\item Given the effort $e$ chosen by the agent, output $x$ is realized
according to $f(x|e)$ and then signal $s\in S$ is realized according
to $\pi(s|x)$. 
\item Payoffs are realized where agent's payoff is $u_{A}=w(s)-c(e)$, and
the principal's payoff is $u_{P}=g(x)-w(s)$. 
\end{itemize}
To summarize, the game is played sequentially in three stages. In
the first stage, the agent chooses the information structure $\pi$
to (partially) inform the principal about the realized output. In
the second stage, the principal offers the agent a contract $w$ contingent
on the signal realization. In the last stage, the agent chooses her
effort $e$.

The equilibrium concept is the standard subgame perfect equilibrium
(SPE). The agent's strategy $(\pi,\sigma_{e}(\pi,w))$ consists of
a choice of information structure $\pi$ and a choice of effort $\sigma_{e}(\pi,w)$
as a function of $\pi$ and the contract $w$ offered by the principal.
The principal's strategy $\sigma_{w}(\pi)$ is the contract he offers
to the agent as a function of the information structure $\pi$ chosen
by the agent.
\begin{defn}
An SPE of the game $\sigma^{*}=(\pi^{*},\sigma_{e}^{*}(\pi,w),\sigma_{w}^{*}(\pi))$
consists of a strategy for the agent $(\pi^{*},\sigma_{e}^{*}(\pi,w))$
and a strategy for the principal $\sigma_{w}^{*}(\pi)$ such that: 
\begin{itemize}
\item $\sigma_{e}^{*}(\pi,w)$ maximizes the agent's expected utility for
every information structure $\pi$ and every principal's choice of
contract $w$. 
\item $\sigma_{w}^{*}(\pi)$ maximizes the principal's expected utility
for every agent's choice of information structure $\pi$ given agent's
equilibrium effort strategy $\sigma_{e}^{*}(\pi,w)$. 
\item $\pi^{*}$ maximizes the agent's expected utility given principal's
equilibrium strategy $\sigma_{w}^{*}(\pi)$ and her own equilibrium
effort strategy $\sigma_{e}^{*}(\pi,w)$. 
\end{itemize}
\end{defn}
We let $p(\cdot|e):E\rightarrow\Delta(S)$ represent the resulting
stochastic mapping from the effort space $E$ to the signal space
$S$ induced by a given information structure $(S,\pi)$ and the underlying
probability distribution of outcomes given effort. Note, for any level
of effort $e$ and any signal realization $s$, $p(s|e)=\sum_{x}f(x|e)\pi(s|x)$
and $\sum_{s}p(s|e)=1,\ \forall e\in E$. Conditional on a given information
structure $(S,\pi)$, both the principal and the agent use the stochastic
mapping $p\left(\cdot|\cdot\right)$ to evaluate their expected payoffs.

We now formulate the problems the principal and the agent solve beginning
with the agent's choice of effort. The agent's expected utility is
$U_{A}=\sum_{s}p(s|e)w(s)-c(e)$. Therefore, the agent's problem in
the last stage of the game (where $\pi$ and $w$ have previously
been chosen in the preceding stages) is 

\begin{equation}
\max_{e}\ \sum_{s}p(s|e)w(s)-c(e).\tag{IC-A}\label{ICA}
\end{equation}

The principal's expected utility is $U_{P}=\sum_{x}g(x)f(x|e)-\sum_{s}p(s|e)w(s)$
if he chooses to implement effort $e$. It is convenient to define
$\mathbb{E}[g(x)|e]=\sum_{x}g(x)f(x|e)$. For a given desired effort
level $e$, the principal chooses a wage schedule that solves 
\begin{align}
\begin{split}\min_{w}\  & \sum_{s}p(s|e)w(s)\ \text{s.t.}\\
 & \sum_{s}p(s|e)w(s)-c(e)\geq\sum_{s}p(s|\hat{e})w(s)-c(\hat{e}),\ \forall\hat{e}\in E,\\
 & w(s)\geq0,\ \forall s\in S.
\end{split}
\label{eq: principal}
\end{align}
The constraints represent the agent's incentive compatibility and
a set of limited liability constraints. Notice that we have not imposed
a participation constraint for the agent. This is because the assumption
$c(e_{1})=0$ together with limited liability implies that setting
$e=e_{1}$ guarantees a non-negative payoff for the agent. Let $W(e,\pi)$
represent the optimal value in (\ref{eq: principal}). This is the
minimum expected wage the principal must pay the agent to implement
effort $e$ given the information structure $\pi$ chosen by the agent.

Given $W(e,\pi)$, the problem of the principal is to choose an effort
level to maximize her expected utility, or, 
\begin{equation}
\max_{e}\ \mathbb{E}\left[g\left(x\right)|e\right]-W(e,\pi).\label{principal_simplified}
\end{equation}

It is possible to express the above as an incentive compatibility
constraint and thus write the agent's problem in the first stage of
the game as 
\[
\max_{e,\pi}\ W(e,\pi)-c(e)
\]
subject to the principal's incentive compatibility constraint
\begin{equation}
\mathbb{E}\left[g\left(x\right)|e\right]-W(e,\pi)\geq\mathbb{E}\left[g\left(x\right)|\hat{e}\right]-W(\hat{e},\pi),\ \forall\hat{e}\in E.\tag{IC-P}\label{eq:ICP}
\end{equation}

\subsection{Remarks on the Environment}

It is useful to discuss various interpretations of the model as well
as our key assumptions.

\subsubsection*{Performance Measure as Information Structure}

In the textbook model of moral hazard, the principal cannot observe
the agent's effort. He therefore uses some imperfect signal of effort
to incentivize the agent. If he can observe the output, he offers
an output-contingent contract; this makes the output the performance
measure for the agent. In our model, the principal does not observe
the output but does observe a signal, which may be correlated with
effort, and he offers a signal-contingent contract. As a result, the
signal is the relevant performance measure for the agent. By choosing
the information structure, the agent influences the set of observables
that will ultimately dictate her compensation, which we interpret
as the agent choosing her performance measure.

We make the assumption that the principal cannot offer output-contingent
contracts. As we have described, while a conventional interpretation
of this assumption is that the principal cannot observe the output
directly, an alternative interpretation is that this restriction arises
during the negotiations between the agent and the principal in choosing
contractual performance measures. 

\subsubsection*{Commitment}

We assume that the agent commits to an information structure in the
first stage of the game. Our interpretation of the information structure
as a contractual performance measure provides a natural justification
of the commitment assumption. When signing a contract, all parties
are aware of and agree on the probabilistic nature of the chosen performance
measure as a function of the output. The contractible nature of the
performance measure makes the commitment assumption necessary.

\subsubsection*{Comparison to the Literature on Bayesian Persuasion}

As in the Bayesian persuasion literature, we can write the problem
in terms of the distribution of posteriors induced by the information
structure. In the Bayesian persuasion, every choice of information
structure induces a distribution of posteriors, where the whole distribution
matters: not only the support, but also the probabilities. In our
setting, every choice of information structure induces a set of distributions
of posteriors, one for each choice of effort. These distributions
are related through their supports: given the support of one distribution,
the supports of the others are pinned down. As we will discuss in
section \ref{sec:A-Geometrical-Game}, the key sufficient statistic
about the choice of information structure  is the distribution of
the likelihood ratios and these are determined by the supports of
the above distributions. Therefore, in our model, only the support
of the distribution of posteriors matters. 

\section{A Geometric Analysis of the Game\label{sec:A-Geometrical-Game}}

We now characterize the equilibrium outcomes of the game. To do so,
we first describe the set of effort levels that are implementable
by some information structure. We then show a ``coarse''-ness result.
That is, we show that it is without loss of generality to restrict
the agent's choice of information structures to binary structures,
where the set of signals has only two discrete points. Using such
information structures, in the next section, we derive sufficient
conditions such that the first-best level of effort is implementable.
When these sufficient conditions are satisfied, we show that the agent
chooses an information structure that implements the first-best effort
level and extracts the entire surplus.

While it is possible to work with zero probability events and define
likelihood ratios -- by describing how division by 0 is defined --
in order to avoid complications, we make the following assumption:
\begin{assumption}
The performance technology is full support, i.e., $\forall x\in X,\forall e\in E,f\left(x|e\right)>0$.
\end{assumption}
This assumption implies that all the likelihood ratios below are well-defined.

\textbf{Implementable Effort.} To characterize the set of implementable
effort levels, we use backward induction and first re-cast the problem
of the principal geometrically. Specifically, we show that the likelihood
ratios for each signal realization $s$ for any effort level $e$
relative to the desired implementable level $e^{*}$ are sufficient
statistics to solve the principal's problem. In other words, we argue
that any desired effort level to implement $e^{*}$ and any information
structure $(S,\pi)$ give rise to a (geometric) space of possible
likelihood ratios and that the principal's optimal choice of compensation
schemes may be reduced to choosing a point in this space of likelihood
ratios.

More formally, we define an implementable effort level $e^{*}$ as
follows:
\begin{defn}
An effort level $e^{*}$ is \emph{implementable} if there exists an
information structure $(S,\pi)$ such that $e^{*}$ is a solution
to the principal's problem (\ref{principal_simplified}).
\end{defn}
Given this definition, an implementable effort level $e^{*}$ must
satisfy the two incentive compatibility constraints in (\ref{eq:ICP})
and (\ref{ICA}) for the principal and the agent, where the agent's
incentive compatibility constraint must hold for all possible histories
including those following a deviation by the principal that involves
recommending an alternative level of effort.

Let $e^{*}$ represent some effort level the agent would like to implement
(in the first stage of the game). To motivate the relevance of likelihood
ratios, consider the agent's interim incentive compatibility constraint
when the principal only pays the agent following a signal realization
$s$:
\[
p\left(s|e^{*}\right)w\left(s\right)-c\left(e^{*}\right)\geq p\left(s|\hat{e}\right)w\left(s\right)-c\left(\hat{e}\right),\forall\hat{e}\in E.
\]
These constraints imply that for any effort level $\hat{e}$ where
signal $s$ is less likely than under $e^{*}$, i.e., $p\left(s|\hat{e}\right)<p\left(s|e^{*}\right)$,
the wage must satisfy
\begin{equation}
w\left(s\right)\geq\frac{c\left(e^{*}\right)-c\left(\hat{e}\right)}{p\left(s|e^{*}\right)-p\left(s|\hat{e}\right)},\label{eq: IC2}
\end{equation}
and if $s$ is more likely under $\hat{e}$ than under $e^{*}$ then
\begin{equation}
\frac{c\left(\hat{e}\right)-c\left(e^{*}\right)}{p\left(s|\hat{e}\right)-p\left(s|e^{*}\right)}\geq w\left(s\right).\label{eq: IC3}
\end{equation}
The first set of constraints (\ref{eq: IC2}) imply that the expected
wage must satisfy
\[
p\left(s|e^{*}\right)w\left(s\right)\geq\max_{\hat{e}:p\left(s|e^{*}\right)>p\left(s|\hat{e}\right)}\frac{c\left(e^{*}\right)-c\left(\hat{e}\right)}{1-\frac{p\left(s|\hat{e}\right)}{p\left(s|e^{*}\right)}}.
\]
In other words, the expected cost of implementing $e^{*}$ for the
principal (and its implicit benefit for the agent) is determined by
the likelihood of signal $s$. The second set of constraints (\ref{eq: IC3})
place an upper bound on the wages the principal may deliver while
respecting incentives of the agent. As we show below, this restriction
can also be formulated in terms of likelihood ratios. 

The above illustration only holds under the assumption that the principal
only compensates the agent following a single signal $s$. We now
show a more general version of this analysis for arbitrary compensation
schemes. To this end, consider an arbitrary wage schedule $w\left(s\right)$
chosen by the principal -- for any effort $e_{i}\in E$ chosen by
the principal on- or off- equilibrium path . We may write the expected
wage paid to the worker as
\begin{align*}
\sum_{s}w\left(s\right)p\left(s|e_{i}\right) & =\sum_{s}w\left(s\right)p\left(s|e^{*}\right)\frac{p\left(s|e_{i}\right)}{p\left(s|e^{*}\right)}\\
 & =\sum_{s}w\left(s\right)p\left(s|e^{*}\right)\cdot\sum_{s}\frac{w\left(s\right)p\left(s|e^{*}\right)}{\sum_{s}w\left(s\right)p\left(s|e^{*}\right)}\frac{p\left(s|e_{i}\right)}{p\left(s|e^{*}\right)}\\
 & =\sum_{s}w\left(s\right)p\left(s|e^{*}\right)\cdot\sum_{s}\alpha_{s}\frac{p\left(s|e_{i}\right)}{p\left(s|e^{*}\right)},
\end{align*}
where $\sum_{s}\alpha_{s}=1$. Since the weights $\alpha_{s}$ do
not depend on $e_{i}$, we may write the agent's interim incentive
compatibility constraint as
\[
\sum_{s}w\left(s\right)p\left(s|e^{*}\right)\cdot\sum_{s}\alpha_{s}\frac{p\left(s|e_{i}\right)}{p\left(s|e^{*}\right)}-c\left(e_{i}\right)\geq\sum_{s}w\left(s\right)p\left(s|e^{*}\right)\cdot\sum_{s}\alpha_{s}\frac{p\left(s|e_{j}\right)}{p\left(s|e^{*}\right)}-c\left(e_{j}\right).
\]
Therefore, if we define $\ell_{i}=1-\sum_{s}\alpha_{s}p\left(s|e_{i}\right)/p\left(s|e^{*}\right)$,
this incentive constraint may be written as
\begin{equation}
\sum_{s}w\left(s\right)p\left(s|e^{*}\right)\cdot\left[\ell_{j}-\ell_{i}\right]\geq c\left(e_{i}\right)-c\left(e_{j}\right),\forall j=1,\cdots,\left|E\right|.\label{eq:ICAr}
\end{equation}
Writing the incentive constraint in this manner reveals that the choice
of likelihood ratios, $\ell_{j}$'s, is sufficient to characterize
payoffs of the principal and the agent and that the choice of contract
$w\left(s\right)$ by the principal may be decomposed into a choice
of $\sum_{s}w\left(s\right)p\left(s|e^{*}\right)$ as well as the
choice of $\left\{ \alpha_{s}\right\} $, or alternatively, $\ell_{j}$'s.
In other words, if we represent the likelihood ratio profile with
the following vector
\[
\mathbf{l}=\left(\ell_{1},\cdots,\ell_{\left|E\right|}\right),
\]
then $\mathbf{l}\in\text{convex hull}\left(\left\{ \big(1-\frac{p\left(s|e_{1}\right)}{p\left(s|e^{*}\right)},\cdots,1-\frac{p\left(s|e_{\left|E\right|}\right)}{p\left(s|e^{*}\right)}\big)\right\} _{s\in S}\right)=\text{co}\left(\mathbf{p}\right)$.
Thus the choice of the principal can be summarized by the overall
level of the compensation $\sum_{s}p\left(s|e^{*}\right)w\left(s\right)=\overline{w}$
together with an element of the set $\text{co}\left(\mathbf{p}\right)$.
We thus have the following lemma:
\begin{lem}
Consider an implementable effort $e^{*}$ and its associated information
structure $\mathbf{p}=\left\{ p\left(\cdot|\cdot\right)\right\} $.
Then the cost to the principal from implementing any effort $e_{i}$
is given by 
\begin{equation}
W(e_{i},\mathbf{p})=\min_{\mathbf{l}\in\text{co}(\mathbf{p})\cap\Omega_{i}}(1-\ell_{i})\cdot\max_{j:\ell_{j}\geq\ell_{i}}\frac{c(e_{i})-c(e_{j})}{\ell_{j}-\ell_{i}},\label{wage_space}
\end{equation}
where 
\begin{equation}
\Omega_{i}=\bigg\{\mathbf{l}\in\mathbb{R}^{\left|E\right|}:\max_{j:\ell_{j}\geq\ell_{i}}\frac{c\left(e_{i}\right)-c\left(e_{j}\right)}{\ell_{j}-\ell_{i}}\leq\min_{j:\ell_{j}<\ell_{i}}\frac{c\left(e_{i}\right)-c\left(e_{j}\right)}{\ell_{j}-\ell_{i}}\bigg\}.\label{feasible_space}
\end{equation}
\end{lem}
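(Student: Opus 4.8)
The plan is to solve the principal's cost-minimization problem \eqref{eq: principal} for a fixed target effort $e_{i}$ by re-parametrizing the principal's decision, leaning on the decomposition already carried out just before the statement. Writing $\overline{w}=\sum_{s}w(s)p(s|e^{*})$ and, for $\overline w>0$, $\alpha_{s}=w(s)p(s|e^{*})/\overline{w}$, that computation shows the expected wage the agent collects under any effort $e_{k}$ equals $\overline{w}(1-\ell_{k})$ with $\ell_{k}=1-\sum_{s}\alpha_{s}p(s|e_{k})/p(s|e^{*})$, and that $\mathbf{l}=(\ell_{1},\dots,\ell_{|E|})\in\text{co}(\mathbf{p})$; hence the principal's objective is $\overline{w}(1-\ell_{i})$ and the incentive constraints \eqref{eq:ICAr} read $\overline{w}(\ell_{j}-\ell_{i})\ge c(e_{i})-c(e_{j})$ for all $j$. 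The first step is to record that the map $w\mapsto(\overline{w},\mathbf{l})$ ranges over $\{\overline w>0\}\times\text{co}(\mathbf{p})$ (plus the degenerate point $\overline w=0$, attained only by $w\equiv0$): given $\overline{w}>0$ and $\mathbf{l}\in\text{co}(\mathbf{p})$, finiteness of $S$ lets us write $\mathbf{l}=\sum_{s}\alpha_{s}\mathbf{p}_{s}$ for some $\alpha\in\Delta(S)$, where $\mathbf{p}_{s}=\big(1-p(s|e_{1})/p(s|e^{*}),\dots,1-p(s|e_{|E|})/p(s|e^{*})\big)$, and then $w(s)=\overline{w}\alpha_{s}/p(s|e^{*})\ge0$ has image $(\overline{w},\mathbf{l})$. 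Since the objective and every constraint depend on $w$ only through $(\overline w,\mathbf l)$, non-uniqueness of the convex representation is irrelevant, and the principal's problem is equivalent to minimizing $\overline w(1-\ell_{i})$ over $\overline w\ge0,\ \mathbf l\in\text{co}(\mathbf p)$ subject to $\overline{w}(\ell_{j}-\ell_{i})\ge c(e_{i})-c(e_{j})$ for all $j$ — the $\overline w=0$ contract being feasible only when $c(e_{i})=0$, in which case it yields value $0$.

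The second step is the inner minimization over $\overline{w}$ at fixed $\mathbf{l}$. Classify the constraints by the sign of $\ell_{j}-\ell_{i}$: indices with $\ell_{j}>\ell_{i}$ give lower bounds $\overline{w}\ge(c(e_{i})-c(e_{j}))/(\ell_{j}-\ell_{i})$; indices with $\ell_{j}<\ell_{i}$ give upper bounds $\overline{w}\le(c(e_{i})-c(e_{j}))/(\ell_{j}-\ell_{i})$ (the denominator being negative); and indices with $\ell_{j}=\ell_{i}$ require $c(e_{i})\le c(e_{j})$. With the convention that $(c(e_{i})-c(e_{j}))/(\ell_{j}-\ell_{i})$ equals $0$ at $j=i$ and $\pm\infty$ when $\ell_{j}=\ell_{i}$, $j\ne i$ (with the sign of $c(e_{i})-c(e_{j})$), the admissible $\overline w$ form the interval with lower endpoint $\underline{w}(\mathbf{l})=\max_{j:\ell_{j}\ge\ell_{i}}(c(e_{i})-c(e_{j}))/(\ell_{j}-\ell_{i})$ and upper endpoint $\min_{j:\ell_{j}<\ell_{i}}(c(e_{i})-c(e_{j}))/(\ell_{j}-\ell_{i})$; this interval is nonempty exactly when $\mathbf{l}\in\Omega_{i}$, and the $j=i$ term makes $\underline w(\mathbf l)\ge0$, so it already absorbs the limited-liability constraint $\overline w\ge0$. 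The pivotal observation is that Assumption~1 forces the inner objective to be \emph{strictly increasing} in $\overline{w}$, since $1-\ell_{i}=\sum_{s}\alpha_{s}p(s|e_{i})/p(s|e^{*})\ge\min_{s}p(s|e_{i})/p(s|e^{*})>0$. Hence, for $\mathbf l\in\text{co}(\mathbf p)\cap\Omega_{i}$, the inner minimum is attained at $\overline{w}=\underline{w}(\mathbf{l})$ and equals $(1-\ell_{i})\underline{w}(\mathbf{l})=(1-\ell_{i})\max_{j:\ell_{j}\ge\ell_{i}}(c(e_{i})-c(e_{j}))/(\ell_{j}-\ell_{i})$; and one checks that $\underline w(\mathbf l)=0$ together with $\mathbf l\in\Omega_{i}$ forces $c(e_{i})=0$, so the $\overline w=0$ contract is consistent with this formula.

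Minimizing this value over $\mathbf{l}\in\text{co}(\mathbf{p})\cap\Omega_{i}$ — with the convention that the minimum over the empty set is $+\infty$, i.e.\ that $e_{i}$ is not implementable under $\mathbf{p}$ — yields precisely \eqref{wage_space}. What remains is attainment of the outer minimum: $\text{co}(\mathbf{p})$ is compact (the convex hull of the finitely many points $\mathbf{p}_{s}$) and $\mathbf{l}\mapsto(1-\ell_{i})\underline{w}(\mathbf{l})$ is lower semicontinuous on $\text{co}(\mathbf p)\cap\Omega_{i}$, its only discontinuities being upward blow-ups as some $\ell_{j}\downarrow\ell_{i}$, so a minimizer exists. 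I expect the only real friction to be the boundary bookkeeping around $\{\ell_{j}=\ell_{i}\}$: fixing the division-by-zero conventions, identifying $\Omega_{i}$ exactly as the nonemptiness region of the $\overline w$-interval, and the attendant semicontinuity/attainment argument. The rest is light — the decomposition into $(\overline w,\mathbf l)$ and the structure of the incentive constraints are already supplied by the pre-statement derivation, and the one genuinely load-bearing inequality is the strict positivity of $1-\ell_{i}$ from full support, which is what places the inner optimum at the \emph{lower} endpoint $\underline w(\mathbf l)$ rather than anywhere else in the interval.
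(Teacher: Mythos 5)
Your proof is correct and takes the same route the paper leaves implicit: reparametrize the principal's problem by $(\overline{w},\mathbf{l})$ using the pre-lemma decomposition, note that $1-\ell_{i}>0$ (by full support) makes the objective $\overline{w}(1-\ell_{i})$ strictly increasing in $\overline{w}$ so the inner optimum sits at the lower endpoint of the feasible $\overline{w}$-interval whose nonemptiness is exactly $\mathbf{l}\in\Omega_{i}$, and then minimize over $\mathbf{l}\in\text{co}(\mathbf{p})\cap\Omega_{i}$. The paper states the lemma without a separate proof, treating it as an immediate consequence of that derivation; your write-up merely supplies the $j=i$ and $\ell_{j}=\ell_{i}$ conventions, the surjectivity of $w\mapsto(\overline{w},\mathbf{l})$, and the attainment argument that the paper glosses over.
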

The above lemma states that the problem of the principal can be reduced
to choosing an effort level as well as a point in the convex hull
of likelihood ratios -- its intersection with the convex cone in
(\ref{feasible_space}). Note that not all likelihood ratio profiles
$\mathbf{l}$ are feasible in the sense that there may be no compensation
scheme that satisfies the agent's interim incentive compatibility
constraint. As in our motivating example above, the likelihood ratios
must permit a wage $w(s)$ that lies between the lower and upper bounds
in (\ref{feasible_space}). The convex cone $\Omega_{i}$ defines
the set of likelihood ratio profiles that admit incentive compatible
compensation schemes.

These results reveal that by choosing an information structure, the
agent effectively determines the convex hull $\text{co}\left(\mathbf{p}\right)$
and then the principal chooses a point that is in the intersection
of $\text{co}\left(\mathbf{p}\right)$ and the convex cone $\Omega_{i}$.
This result by itself does not make the analysis more tractable as
it does not immediately describe the set of feasible convex hulls
$\text{co\ensuremath{\left(\mathbf{p}\right)}}$. The following proposition
provides such a characterization. Note that, similar to $\text{co}\left(\mathbf{p}\right)$,
the convex hull $\text{co}\left(\mathbf{f}\right)$ is the convex
hull created by the points $\left\{ \bigg(1-\frac{f\left(x|e_{1}\right)}{f\left(x|e^{*}\right)},\cdots,1-\frac{f\left(x|e_{\left|E\right|}\right)}{f\left(x|e^{*}\right)}\bigg)\right\} _{x\in X}$.
\begin{prop}
\label{prop:For-any-information}For any information structure $\left(S,\pi\right)$
with $\left|S\right|<\infty$, its associated $\text{co}\left(\mathbf{p}\right)$
is a subset of $\text{co}\left(\mathbf{f}\right)$ that contains the
origin $\mathbf{0}=\left(0,\cdots,0\right)$. Additionally, for any
convex subset $C$ of $\text{co}\left(\mathbf{f}\right)$ that contains
the origin and has a finite set of extreme points, there exists an
information structure $\left(S,\pi\right)$ such that $\text{co}\left(\mathbf{p}\right)=C$.
\end{prop}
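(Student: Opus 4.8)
\emph{Plan.} I would prove the two halves separately, working throughout with $p(s|e)=\sum_{x}f(x|e)\pi(s|x)$ and writing $v^{x}=\bigl(1-\tfrac{f(x|e_{1})}{f(x|e^{*})},\dots,1-\tfrac{f(x|e_{|E|})}{f(x|e^{*})}\bigr)$ and $q_{s}=\bigl(1-\tfrac{p(s|e_{1})}{p(s|e^{*})},\dots,1-\tfrac{p(s|e_{|E|})}{p(s|e^{*})}\bigr)$ for the generators of $\text{co}(\mathbf{f})$ and $\text{co}(\mathbf{p})$ respectively. For the forward direction, first discard any signal with $p(s|e^{*})=0$: by the full support assumption this forces $\pi(s|x)=0$ for all $x$, hence $p(s|e)=0$ for every $e$, so such signals are never realized. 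For a remaining $s$, set $\beta_{x}^{s}=f(x|e^{*})\pi(s|x)/p(s|e^{*})\ge 0$; then $\sum_{x}\beta_{x}^{s}=1$, and since $p(s|e_{i})/p(s|e^{*})=\sum_{x}\beta_{x}^{s}\,f(x|e_{i})/f(x|e^{*})$ one gets $q_{s}=\sum_{x}\beta_{x}^{s}v^{x}$; thus every generator of $\text{co}(\mathbf{p})$ lies in $\text{co}(\mathbf{f})$, so $\text{co}(\mathbf{p})\subseteq\text{co}(\mathbf{f})$. The $i$-th coordinate of $\sum_{s}p(s|e^{*})q_{s}$ equals $\sum_{s}p(s|e^{*})-\sum_{s}p(s|e_{i})=1-1=0$, so $\mathbf{0}$ is a convex combination of the generators with the strictly positive weights $p(s|e^{*})$; hence $\mathbf{0}$ lies in the relative interior of $\text{co}(\mathbf{p})$ (in particular in $\text{co}(\mathbf{p})$).

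\emph{The converse.} Let $v_{1},\dots,v_{k}$ be the extreme points of $C$. The forward direction reveals that the operative hypothesis is $\mathbf{0}\in\operatorname{relint}(C)$, so fix $\mu_{j}>0$ with $\sum_{j}\mu_{j}=1$ and $\sum_{j}\mu_{j}v_{j}=\mathbf{0}$; since $v_{j}\in C\subseteq\text{co}(\mathbf{f})$, fix also $\gamma^{j}\in\Delta(X)$ with $v_{j}=\sum_{x}\gamma_{x}^{j}v^{x}$. Pick $\varepsilon>0$ small enough that $\varepsilon<1$ and $\varepsilon\le\min_{x}f(x|e^{*})$, and on the signal set $S=\{s_{0},s_{1},\dots,s_{k}\}$ define
\[
\pi(s_{j}|x)=\frac{\varepsilon\,\mu_{j}\,\gamma_{x}^{j}}{f(x|e^{*})}\ \ (j=1,\dots,k),\qquad \pi(s_{0}|x)=1-\sum_{j=1}^{k}\pi(s_{j}|x).
\]
These weights are non-negative — for $s_{0}$ because $\sum_{j}\varepsilon\mu_{j}\gamma_{x}^{j}\le\varepsilon\le f(x|e^{*})$, which is where full support enters — and each row sums to one. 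A direct computation gives $p(s_{j}|e^{*})=\varepsilon\mu_{j}>0$ and $q_{s_{j}}=v_{j}$, and $p(s_{0}|e_{i})=1-\varepsilon$ for every $i$ (using $\sum_{j}\mu_{j}v_{j}=\mathbf{0}$), so $q_{s_{0}}=\mathbf{0}$. Since every generator lies in $C$ and every extreme point $v_{j}$ is realized, $\text{co}(\mathbf{p})=\text{co}\{v_{1},\dots,v_{k},\mathbf{0}\}=C$.

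\emph{The main obstacle.} The difficulty is to satisfy three requirements at once: realize all $k$ extreme points, keep the $p(\cdot|e^{*})$-weighted average of the $q_{s}$ at the origin, and keep $\pi$ a legitimate non-negative, row-stochastic kernel. Using only $s_{1},\dots,s_{k}$ forces the normalization $\sum_{j}w_{j}\gamma^{j}=f(\cdot|e^{*})$, i.e.\ a ``splitting'' of $\mathbf{0}=\sum_{j}\mu_{j}v_{j}$ into representations of the individual $v_{j}$; this is awkward precisely when some $v_{j}$ lies on the boundary of $\text{co}(\mathbf{f})$ and has a tiny preimage under $\gamma\mapsto\sum_{x}\gamma_{x}v^{x}$. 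Inserting the slack signal $s_{0}$ and pinning its likelihood point at the origin resolves this — but only because $\mathbf{0}$ is a \emph{strictly positive} combination of the $v_{j}$. Thus the construction genuinely needs $\mathbf{0}\in\operatorname{relint}(C)$, which by the forward direction is exactly the condition that must hold for $\text{co}(\mathbf{p})=C$ to be achievable; so in the statement ``contains the origin'' should be read as ``contains the origin in its relative interior.''
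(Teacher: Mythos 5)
Your forward direction is essentially the paper's argument (posterior weights $\beta^{s}_{x}=f(x|e^{*})\pi(s|x)/p(s|e^{*})$), with a useful extra observation: because the mixing weights $p(s|e^{*})$ are strictly positive, the origin is in the \emph{relative interior} of $\text{co}(\mathbf{p})$, not merely in it.

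Your converse is a genuinely different construction. The paper takes the extreme points of $C$ as the signal set, writes $\mathbf{0}=\sum_{\mathbf{z}}\tau_{\mathbf{z}}\mathbf{z}$ by Caratheodory, picks representations $\mathbf{z}=\sum_{x}\beta_{\mathbf{z}}(x)v^{x}$, and then tries to argue from linear independence of a subset $Y\subset X$ that $\sum_{\mathbf{z}}\tau_{\mathbf{z}}\beta_{\mathbf{z}}(x)=f(x|e^{*})$, which is exactly the row-stochasticity constraint. That step is delicate: the representations $\beta_{\mathbf{z}}$ are generally non-unique, the sets $Y$ may differ across $\mathbf{z}$, and the identity $\mathbf{0}=\sum_{x\in Y}f(x|e^{*})v^{x}$ used there is only guaranteed when $Y=X$. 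Your construction sidesteps all of this by scaling the extreme-point signals by a small $\varepsilon$ and adding a slack signal $s_{0}$ whose likelihood point lands at $\mathbf{0}$; row-stochasticity is then automatic and the linear-independence gymnastics disappear. You buy simplicity at the modest cost of one extra signal, and the resulting $\text{co}(\mathbf{p})$ is still $\text{co}\{v_{1},\dots,v_{k},\mathbf{0}\}=C$.

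You are also right that the statement as written is a bit too strong: since any realizable $\text{co}(\mathbf{p})$ has $\mathbf{0}$ in its relative interior, a convex $C$ with $\mathbf{0}$ on its relative boundary (e.g., a segment $[\mathbf{0},v]$ with $v\neq\mathbf{0}$) cannot equal $\text{co}(\mathbf{p})$ for any $\pi$. The correct hypothesis for the converse is $\mathbf{0}\in\operatorname{relint}(C)$; this is exactly what makes your choice of strictly positive $\mu_{j}$ possible, and it is also the condition under which the paper's $\tau_{\mathbf{z}}$ can all be taken strictly positive so that every extreme point receives a signal with positive probability. In all of the paper's applications (e.g., the line through $\mathbf{0}$ and $\mathbf{l}^{*}$ in Theorem 1) the origin is indeed a relative interior point, so nothing downstream breaks, but the proposition as stated should be read with the relative-interior qualifier you propose.
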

\begin{proof}
Let $\left(S,\pi\right)$ be an information structure. Then,
\begin{align*}
\frac{p(s|e_{i})}{p(s|e^{*})} & =\frac{\sum_{x}f(x|e_{i})\pi(s|x)}{\sum_{x}f(x|e^{*})\pi(s|x)}=\sum_{x}\frac{f(x|e^{*})\pi(s|x)}{\sum_{x}f(x|e^{*})\pi(s|x)}\frac{f(x|e_{i})}{f(x|e^{*})}\\
 & =\sum_{x}\beta_{s}\left(x\right)\frac{f(x|e_{i})}{f(x|e^{*})},\forall i=1,\cdots,\left|E\right|,
\end{align*}
where $\sum_{x}\beta_{s}\left(x\right)=1$; $\beta_{s}\left(x\right)$
is the principal's posterior probability of $x$ after observing $s$.
The above implies that
\[
\left(1-\frac{p(s|e_{1})}{p(s|e^{*})},\cdots,1-\frac{p(s|e_{\left|E\right|})}{p(s|e^{*})}\right)=\sum_{x}\beta_{s}\left(x\right)\left(1-\frac{f(x|e_{1})}{f(x|e^{*})},\cdots,1-\frac{f(x|e_{\left|E\right|})}{f(x|e^{*})}\right).
\]
Hence, the left hand side of the above is a member of $\text{co}\left(\mathbf{f}\right)$
for all $s\in S$. As a result $\text{co}\left(\mathbf{p}\right)\subset\text{co}\left(\mathbf{f}\right)$.
Moreover, we have 
\[
\sum_{s}p(s|e^{*})\left(1-\frac{p(s|e_{i})}{p(s|e^{*})}\right)=1-\sum_{s}p(s|e_{i})=0,\forall i=1,\cdots,\left|E\right|.
\]
This implies that the convex set $\text{co}(\mathbf{p})$ includes
the origin.

Now consider an arbitrary convex set $C\subset\text{co}\left(\mathbf{f}\right)$
that contains the origin with each of its members being of the form
$\mathbf{z}=\left(z_{1},\cdots,z_{\left|E\right|}\right)$. Let $S$
be the set of extreme points of $C$. Then, since $\mathbf{0}\in C$,
by Caratheodory theorem -- see \citet{rockafellar1970convex} --
there must exist $\left\{ \tau_{\mathbf{z}}\right\} _{\mathbf{z}\in S}$
such that $\sum_{\mathbf{z\in}S}\tau_{\mathbf{z}}=1$ and
\begin{equation}
0=\sum_{\mathbf{z}\in S}\tau_{\mathbf{z}}z_{i},\forall i=1,\cdots,\left|E\right|.\label{eq: tauz}
\end{equation}
By definition of $\text{co}\left(\mathbf{f}\right)$, we must have
$z_{i}\leq1$ for all $\mathbf{z}\in C$. Moreover, since $\mathbf{z\in\text{co}\left(\mathbf{f}\right)}$,
there must exist a subset $Y\subset X$ whose members are linearly
independent together with $\beta_{\mathbf{z}}\left(x\right)$ such
that
\[
\sum_{x\in Y}\beta_{\mathbf{z}}\left(x\right)=1,\beta_{\mathbf{z}}\left(x\right)\geq0,z_{i}=\sum_{x\in Y}\beta_{\mathbf{z}}\left(x\right)\left[1-\frac{f\left(x|e_{i}\right)}{f\left(x|e^{*}\right)}\right],\forall i=1,\cdots,\left|E\right|.
\]
Replacing the above in (\ref{eq: tauz}) leads to
\[
0=\sum_{x\in Y}\sum_{\mathbf{z}\in S}\tau_{\mathbf{z}}\beta_{\mathbf{z}}\left(x\right)\left[1-\frac{f\left(x|e_{i}\right)}{f\left(x|e^{*}\right)}\right],\forall i=1,\cdots,\left|E\right|.
\]
Since the points in $Y$ are linearly independent and we also know
that
\[
0=\sum_{x\in Y}f\left(x|e^{*}\right)\left[1-\frac{f\left(x|e_{i}\right)}{f\left(x|e^{*}\right)}\right],\forall i=1,\cdots,\left|E\right|,
\]
we must have that 
\[
f\left(x|e^{*}\right)=\sum_{\mathbf{z}}\tau_{\mathbf{z}}\beta_{\mathbf{z}}\left(x\right).
\]
Let us define
\[
\pi\left(\mathbf{z}|x\right)=\frac{\beta_{\mathbf{z}}\left(x\right)\tau_{\mathbf{z}}}{\sum_{\mathbf{z}\in S}\beta_{\mathbf{z}}\left(x\right)\tau_{\mathbf{z}}}=\frac{\beta_{\mathbf{z}}\left(x\right)\tau_{\mathbf{z}}}{f\left(x|e^{*}\right)}.
\]
Then under $\pi\left(\mathbf{z}|x\right)$, we have
\begin{align*}
1-\frac{p\left(\mathbf{z}|e_{i}\right)}{p\left(\mathbf{z}|e^{*}\right)} & =1-\frac{\sum_{x}\frac{\beta_{\mathbf{z}}\left(x\right)\tau_{\mathbf{z}}}{f\left(x|e^{*}\right)}f\left(x|e_{i}\right)}{\sum_{x}\frac{\beta_{\mathbf{z}}\left(x\right)\tau_{\mathbf{z}}}{f\left(x|e^{*}\right)}f\left(x|e^{*}\right)}\\
 & =1-\frac{\sum_{x}\beta_{\mathbf{z}}\left(x\right)\tau_{\mathbf{z}}\frac{f\left(x|e_{i}\right)}{f\left(x|e^{*}\right)}}{\tau_{\mathbf{z}}\sum_{x}\beta_{\mathbf{z}}\left(x\right)}\\
 & =1-\sum_{x}\beta_{\mathbf{z}}\left(x\right)\frac{f\left(x|e_{i}\right)}{f\left(x|e^{*}\right)}=z_{i},\forall i=1,\cdots,\left|E\right|,
\end{align*}
 which concludes the proof.
\end{proof}
Proposition \ref{prop:For-any-information} implies that any convex
subset of $\text{co}\left(\mathbf{f}\right)$ that contains the origin
can be chosen by the agent as $\text{co}\left(\mathbf{p}\right)$.
This implies that instead of a choice of $\mathbf{p}$, we can focus
on a choice of a convex subset of $\text{co}\left(\mathbf{f}\right)$
that contains the origin. Thus the agent-optimal information structure
that implements $e^{*}$ can be thought of as the equilibrium of the
following game:
\begin{itemize}
\item \textbf{Stage 1. }The agent chooses a finite set of points $L$ inside
the convex set $\text{co}\left(\mathbf{f}\right)$ such that the convex
hull of these points $\text{conv}\left(L\right)$ includes the origin.
\item \textbf{Stage 2. }The principal chooses an effort level $e_{i}\in E$
and a point $\ell\in\text{conv}\left(L\right)\cap\Omega_{i}$ to maximize
$\mathbb{E}\left[g\left(x\right)|e_{i}\right]-\left(1-\ell_{i}\right)\cdot\max_{j:\ell_{j}>\ell_{i}}\frac{c(e_{i})-c(e_{j})}{\ell_{j}-\ell_{i}}$.
\item \textbf{Stage 3. }The choice of principal in stage 2 coincides with
$e^{*}$.
\end{itemize}
The following example illustrates the construction of $\text{co}\left(\mathbf{f}\right)$
and the equilibrium response of the principal.
\begin{example}
\label{exa:Suppose-that-,}Suppose that $X=\left\{ x_{1}=0,x_{2}=1,x_{3}=2\right\} $
and $E=\left\{ e_{1},e_{2},e_{3}\right\} $, where $c\left(e_{1}\right)=0$,
$c\left(e_{2}\right)=0.1$, and $c\left(e_{3}\right)=0.3$. The performance
technology is given by 
\[
f=\begin{bmatrix}0.35 & 0.50 & 0.15\\
0.05 & 0.50 & 0.45\\
0.10 & 0.15 & 0.75
\end{bmatrix},
\]
where $f_{ij}=f\left(x_{j}|e_{i}\right)$. Moreover, the principal\textquoteright s
payoff from realization of output $x$ is equal to $x$, that is,
$g(x)=x$. This yields 
\[
\mathbb{E}\left[g(x)|e_{1}\right]=0.8,\ \mathbb{E}\left[g(x)|e_{2}\right]=1.4,\ \mathbb{E}\left[g(x)|e_{3}\right]=1.65.
\]
Therefore, the first-best is to implement effort $e_{3}$. If we set
$e^{*}=e_{3}$, then since $1-f\left(x|e_{3}\right)/f\left(x|e^{*}\right)=0$,
we can embed the set $\text{co}\left(\mathbf{f}\right)$ in $\mathbb{R}^{2}$.
The gray area in Figure \ref{fig:The-geometric-representation} represents
this set with $a_{i}$ being the point associated with $x_{i}\in X$. 

To understand the geometry of the game between the principal and the
agent, consider a signal structure with 4 realizations given by $S=\left\{ s_{1},s_{2},s_{3},s_{4}\right\} $
and
\[
\pi=\begin{bmatrix}0.5 & 0.2 & 0.1 & 0.2\\
0.2 & 0.3 & 0.3 & 0.2\\
0.05 & 0.05 & 0.1 & 0.8
\end{bmatrix},
\]
where $\pi_{ij}=\pi(s_{j}|x_{i})$. We can use $p\left(s|e\right)=\sum_{x\in X}\pi\left(s|x\right)f\left(x|e\right)$
to construct the likelihood ratios. The green area in Figure \ref{fig:The-geometric-representation}
illustrates the convex set $\text{co}(\mathbf{p})$.

To better illustrate the incentives in the geometric game between
the principal and the agent, suppose the agent reveals $x$ to the
principal. As we have illustrated above, the choice of contract by
the principal is equivalent to the choice of a point in the convex
set $\text{co}\left(\mathbf{f}\right)$.

To think about the incentives of the principal, if the principal is
to implement $e_{3}$, he chooses $\mathbf{l}\in\text{co}\left(\mathbf{f}\right)$
to minimize its cost given by 
\[
\max\left\{ \frac{c\left(e_{3}\right)-c\left(e_{2}\right)}{\ell_{2}},\frac{c\left(e_{3}\right)-c\left(e_{1}\right)}{\ell_{1}}\right\} =\max\left\{ \frac{0.2}{\ell_{2}},\frac{0.3}{\ell_{1}}\right\} 
\]
The lower contour sets associated with the above cost function are
convex cones in the shape of positive orthants -- the shaded blue
area highlighted in the right panel of Figure \ref{fig:The-geometric-representation}.
For this example, the above cost function is minimized at $\mathbf{\mathbf{l^{*}}}$.
On the other hand, if the principal is to implement $e_{2}$, he chooses
$\mathbf{l}\in\text{co}\left(\mathbf{f}\right)$ to minimize the cost
given by $\left(1-\ell_{2}\right)\frac{0.1}{\ell_{1}-\ell_{2}}$ when
$\ell_{1}\geq\ell_{2}$. This cost is minimized at $a_{3}$. For $e_{3}$
to be implementable, the principal's payoff associated with $e_{2}$
-- implied by the principal choosing $a_{3}$ -- should be lower
than his payoff associated with $e_{3}$-- implied by the principal
choosing$\mathbf{\mathbf{l^{*}}}$. The set of likelihood ratios that
satisfy this condition are highlighted by the blue shaded area in
the right panel of Figure \ref{fig:The-geometric-representation}.
Note that $\mathbf{\mathbf{l^{*}}}$ is not contained in this area.
As a result, by choosing to reveal $x$ to the principal, the agent
is unable to implement $e_{3}$. One can show that $e_{2}$ is implementable
under full revelation of $x$.

\begin{figure}[t]
\begin{centering}
\begin{tikzpicture}[scale=1.3]
\clip (-2.9,-2.9) rectangle (2.7,2.7);
\fill[fill=gray!20!white] (0.8,0.4) -- (-7/3,-7/3) -- (-2.5,0.5) -- cycle;
\filldraw[fill=green!20!white, draw=green!50!black,thick] (-1.4043,-0.2553) -- (-1.2195,-0.7805) -- (-0.5385,-0.5385) -- (0.5538, 0.2769) -- cycle;
\fill[fill=gray] (0.8,.4) circle [radius=0.05] node[anchor=west]{$a_3$};
\fill[fill=gray] (-7/3,-7/3) circle [radius=0.05] node[anchor=north]{$a_2$};
\fill[fill=gray] (-2.5,.5) circle [radius=0.05] node[anchor=east]{$a_1$};
\fill[fill=green!40!black] (-1.4043,-0.2553) circle [radius=0.05];
\fill[fill=green!40!black] (-1.2195,-0.7805) circle [radius=0.05];
\fill[fill=green!40!black] (-0.5385,-0.5385) circle [radius=0.05];
\fill[fill=green!40!black] (0.5538, 0.2769) circle [radius=0.05];
\draw[->, very thick] (-10,0) -- (1.5,0);
\draw[->, very thick] (0,-10) -- (0,1.5);
\filldraw (0,0) circle [radius=0.05];
\draw[step=1,gray,very thin] (-5,-5) grid (1.5,1.5);
\draw (0,1.7) node {$\ell_2$};
\draw (1.7,0) node {$\ell_1$};
\end{tikzpicture}\begin{tikzpicture}[scale=1.3]
\clip (-2.9,-2.9) rectangle (2.7,2.7);   
\fill[fill=gray!20!white] (0.8,0.4) -- (-7/3,-7/3) -- (-2.5,0.5) -- cycle;   
\draw[->, very thick] (-10,0) -- (1.5,0);   
\draw[->, very thick] (0,-10) -- (0,1.5);   
\filldraw (0,0) circle [radius=0.05];   
\fill[fill=gray] (0.8,.4) circle [radius=0.05] node[below]{$a_3$};
\fill[fill=gray] (-7/3,-7/3) circle [radius=0.05] node[anchor=north]{$a_2$};
\fill[fill=gray] (-2.5,.5) circle [radius=0.05] node[anchor=north]{$a_1$};
\fill[fill=gray] (0.6087,0.4058) circle [radius=0.05] node[above]{$\mathbf{l}^*$};
\draw[dashed, very thick, draw=blue] (3/4,1/2) -- (3/4,1.5);
\draw[dashed, very thick, draw=blue] (3/4,1/2) -- (1.5,1/2);
\filldraw[fill=blue!30!white, draw=blue, opacity=0.4] (3/4,1/2) -- (3/4,1) -- (1,1) -- (1,1/2) -- cycle;   
\draw[step=1,gray,very thin] (-5,-5) grid (1.5,1.5);   
\draw (0,1.7) node {$\ell_2$};   
\draw (1.7,0) node {$\ell_1$};   
\end{tikzpicture}
\par\end{centering}
\caption{The geometric representation of information structures in the space
of likelihood ratios\label{fig:The-geometric-representation}. The
left panel shows an example with 4 signals; the right panel shows
the incentives of the principal when the agent reveals $x$.}
\end{figure}
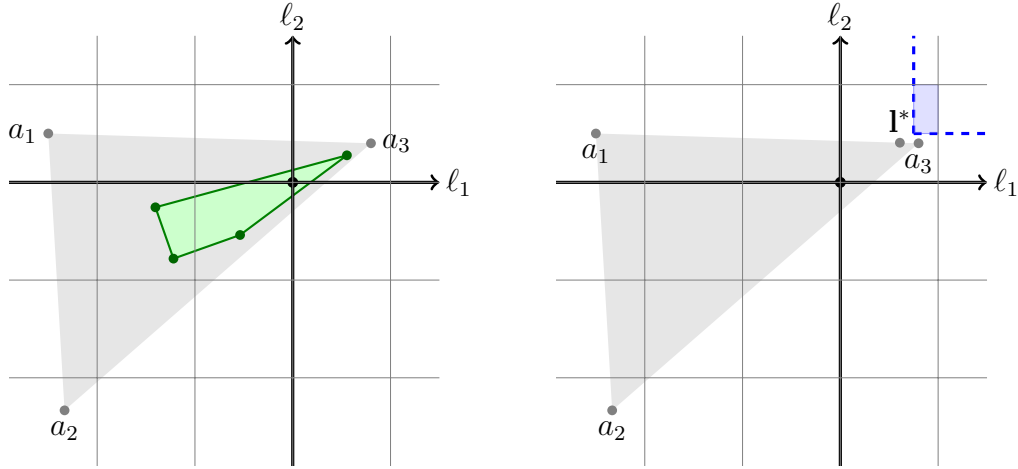
\end{example}

\subsection{Binary Information Structures}

We now exploit our geometric interpretation to show that we may restrict
the agent to choose information structures with at most two signals
without loss of generality. 
\begin{prop}
\label{prop:If--is}If $e^{*}$ is implementable by some information
structure $(S,\pi)$ and delivers expected wage $W\left(e^{*},\pi\right)$
to the agent, then $e^{*}$ is also implementable by a binary information
structure $\left(\hat{S},\hat{\pi}\right)$ with $|\hat{S}|=2$ and
$W\left(e^{*},\hat{\pi}\right)=W\left(e^{*},\pi\right)$.
\end{prop}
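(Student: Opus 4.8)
The plan is to leverage the geometric reformulation: by Proposition~\ref{prop:For-any-information}, the agent may pick \emph{any} convex subset of $\text{co}(\mathbf{f})$ that contains the origin and has finitely many extreme points as her $\text{co}(\mathbf{p})$, and by equation~(\ref{wage_space}) the principal's cost of implementing an effort $e_i$ is obtained by minimizing a fixed objective over $\text{co}(\mathbf{p})\cap\Omega_i$. The idea is to shrink the $\text{co}(\mathbf{p})$ associated with $(S,\pi)$ down to a line \emph{segment}, which is exactly what a two-signal structure realizes, chosen so that (i) it still contains the point that was optimal for implementing $e^{*}$, and (ii) it lies inside $\text{co}(\mathbf{p})$.

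First I would extract the binding likelihood ratio: let $\mathbf{l}^{*}\in\text{co}(\mathbf{p})\cap\Omega_{e^{*}}$ attain the minimum in~(\ref{wage_space}), so $W(e^{*},\mathbf{p})=(1-\ell^{*}_{e^{*}})\max_{j:\ell^{*}_{j}\geq\ell^{*}_{e^{*}}}\frac{c(e^{*})-c(e_{j})}{\ell^{*}_{j}-\ell^{*}_{e^{*}}}<\infty$. If $c(e^{*})>0$ then the objective in~(\ref{wage_space}) is $+\infty$ at the origin (some $j$ has $\ell_{j}=\ell_{e^{*}}=0$ while $c(e_{j})<c(e^{*})$), hence $\mathbf{l}^{*}\neq\mathbf{0}$; the degenerate case $c(e^{*})=0$, where $W(e^{*},\cdot)\equiv 0$, I would dispatch in a line, e.g.\ by duplicating an uninformative signal (or, if $\text{co}(\mathbf{p})\neq\{\mathbf{0}\}$, taking the segment from $\mathbf{0}$ to any nonzero point of $\text{co}(\mathbf{p})$). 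Set $C=\text{conv}\{\mathbf{0},\mathbf{l}^{*}\}$. Since $\mathbf{0},\mathbf{l}^{*}\in\text{co}(\mathbf{p})\subseteq\text{co}(\mathbf{f})$ and $\text{co}(\mathbf{f})$ is convex, $C\subseteq\text{co}(\mathbf{f})$; it contains the origin and has exactly two extreme points, so by Proposition~\ref{prop:For-any-information} there is an information structure $(\hat{S},\hat{\pi})$ with $|\hat{S}|=2$ and $\text{co}(\hat{\mathbf{p}})=C$.

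It remains to check that the wage is unchanged and $e^{*}$ is still optimal. Because $C\subseteq\text{co}(\mathbf{p})$, for every effort $e_{i}$ we have $C\cap\Omega_{i}\subseteq\text{co}(\mathbf{p})\cap\Omega_{i}$, so minimizing the same objective over the smaller set gives $W(e_{i},\hat{\mathbf{p}})\geq W(e_{i},\mathbf{p})$ (with the convention $\min\emptyset=+\infty$). For $e^{*}$ the reverse inequality also holds, because $\mathbf{l}^{*}\in C\cap\Omega_{e^{*}}$ is feasible in~(\ref{wage_space}) for $\hat{\mathbf{p}}$ and attains value $W(e^{*},\mathbf{p})$; hence $W(e^{*},\hat{\pi})=W(e^{*},\pi)$. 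Finally, for any $e_{i}$,
\[
\mathbb{E}[g(x)|e_{i}]-W(e_{i},\hat{\mathbf{p}})\ \leq\ \mathbb{E}[g(x)|e_{i}]-W(e_{i},\mathbf{p})\ \leq\ \mathbb{E}[g(x)|e^{*}]-W(e^{*},\mathbf{p})\ =\ \mathbb{E}[g(x)|e^{*}]-W(e^{*},\hat{\mathbf{p}}),
\]
where the middle inequality is implementability of $e^{*}$ under $\pi$. Thus $e^{*}$ solves the principal's problem~(\ref{principal_simplified}) under $\hat{\pi}$, i.e.\ $e^{*}$ is implementable by $(\hat{S},\hat{\pi})$ and delivers the same wage.

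The step I expect to need the most care is the double-sided comparison in the last paragraph: shrinking $\text{co}(\mathbf{p})$ to $C$ raises the principal's cost of \emph{every} effort, which on its face is neutral for the principal's choice, so the argument hinges on the fact that keeping the optimizer $\mathbf{l}^{*}$ inside $C$ pins $W(e^{*},\cdot)$ while only the costs of the competing efforts can increase, making $e^{*}$ weakly more attractive. A secondary nuisance is the bookkeeping around degenerate efforts (where~(\ref{wage_space}) hits $0/0$ or $+\infty$ at the origin) and confirming that Proposition~\ref{prop:For-any-information} indeed outputs $|\hat{S}|=2$, which it does precisely because $C$ has two extreme points once $\mathbf{l}^{*}\neq\mathbf{0}$.
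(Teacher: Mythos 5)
Your overall strategy is the right one and matches the paper's: shrink $\text{co}(\mathbf{p})$ to a line segment that still contains the principal's optimal likelihood point $\mathbf{l}^{*}$, so that $W(e^{*},\cdot)$ is preserved while the cost of every competing effort can only rise, leaving $e^{*}$ weakly more attractive. The double-sided comparison in your last paragraph is exactly the logic the paper's proof uses. However, the specific segment you feed into Proposition~\ref{prop:For-any-information}, $C=\text{conv}\{\mathbf{0},\mathbf{l}^{*}\}$, is not realizable as $\text{co}(\hat{\mathbf{p}})$ for a genuine two-signal structure, and that step is a real gap.

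Here is the obstruction. For any finite-signal information structure with full-support $p(\cdot|e^{*})$, the identity
\begin{equation*}
\sum_{s} p(s|e^{*})\left(1-\frac{p(s|e_{i})}{p(s|e^{*})}\right)=0,\quad\forall i,
\end{equation*}
exhibits the origin as a \emph{strictly} positive convex combination of the vertices $\{\mathbf{l}_{s}\}_{s}$. Hence $\mathbf{0}$ always lies in the relative interior of $\text{co}(\mathbf{p})$, never at an extreme point, unless $\text{co}(\mathbf{p})=\{\mathbf{0}\}$. Your $C$ has the origin as a vertex, so no binary structure can have $\text{co}(\hat{\mathbf{p}})=C$. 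Concretely, if you run the construction inside the proof of Proposition~\ref{prop:For-any-information} on $C$, the Carath\'eodory step forces $\tau_{\mathbf{l}^{*}}=0$, so the resulting kernel never emits the signal $\mathbf{l}^{*}$, and $\text{co}(\hat{\mathbf{p}})$ collapses to $\{\mathbf{0}\}$ rather than $C$. (So the letter of Proposition~\ref{prop:For-any-information} is slightly too permissive; it needs $\mathbf{0}$ in the relative interior of $C$, which is how it is used everywhere else in the paper.)

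The fix is small: extend the segment past the origin, taking $C=\text{conv}\{-\alpha\mathbf{l}^{*},\mathbf{l}^{*}\}$ for some $\alpha>0$ with $-\alpha\mathbf{l}^{*}\in\text{co}(\mathbf{p})$. Such an $\alpha$ exists precisely because $\mathbf{0}$ is a relative-interior point of $\text{co}(\mathbf{p})$; then $C\subseteq\text{co}(\mathbf{p})$, $\mathbf{0}$ is interior to $C$, and the rest of your argument goes through verbatim. This is effectively what the paper does, only more explicitly: it builds $\hat{\pi}(H|x)=\sum_{s}\beta_{s}\pi(s|x)$ with $\beta_{s}\propto\alpha_{s}/p(s|e^{*})$, verifies the $H$-vertex is $\mathbf{l}^{*}$, and checks directly that the $L$-vertex --- which is $-[\hat{p}(H|e^{*})/\hat{p}(L|e^{*})]\,\mathbf{l}^{*}$, i.e.\ precisely such a $-\alpha\mathbf{l}^{*}$ --- also lies in $\text{co}(\mathbf{p})$.
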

\begin{proof}
Suppose the effort level $e^{*}$ is implementable by $(S,\pi)$ and
consider the optimization problem in (\ref{wage_space}). Let $\mathbf{l}^{*}\in\text{co}\left(\mathbf{p}\right)$
be the optimal choice for the principal when choosing $e^{*}$. Given
the definition of $\text{co}\left(\mathbf{p}\right)$ and the geometric
game described above, there must exist $\left\{ \alpha_{s}\right\} _{s\in S}$
such that $\alpha_{s}\geq0$, $\sum_{s\in S}\alpha_{s}=1$ and
\[
\ell_{i}^{*}=1-\sum_{s\in S}\alpha_{s}\frac{p\left(s|e_{i}\right)}{p\left(s|e^{*}\right)},\forall i=1,\cdots,\left|E\right|.
\]

Let the information structure $\left(\hat{S},\hat{\pi}\right)$ be
defined as $\hat{S}=\left\{ L,H\right\} $ and 
\[
\hat{\pi}\left(H|x\right)=\sum_{s}\beta_{s}\pi\left(s|x\right),\hat{\pi}\left(L|x\right)=1-\hat{\pi}\left(H|x\right),
\]
where 
\[
\beta_{s}=\frac{\alpha_{s}/p\left(s|e^{*}\right)}{\sum_{s}\alpha_{s}/p\left(s|e^{*}\right)}.
\]
The probability function given this information structure satisfies
\begin{align*}
1-\frac{\hat{p}\left(H|e_{i}\right)}{\hat{p}\left(H|e^{*}\right)} & =1-\frac{\sum_{x}\hat{\pi}\left(H|x\right)f\left(x|e_{i}\right)}{\sum_{x}\hat{\pi}\left(H|x\right)f\left(x|e^{*}\right)},\\
 & =1-\frac{\sum_{s\in S}\frac{\alpha_{s}}{p\left(s|e^{*}\right)}\sum_{x}\pi\left(s|x\right)f\left(x|e_{i}\right)}{\sum_{s\in S}\frac{\alpha_{s}}{p\left(s|e^{*}\right)}\sum_{x}\pi\left(s|x\right)f\left(x|e^{*}\right)}\\
 & =1-\frac{\sum_{s\in S}\frac{\alpha_{s}}{p\left(s|e^{*}\right)}p\left(s|e_{i}\right)}{\sum_{s\in S}\frac{\alpha_{s}}{p\left(s|e^{*}\right)}p\left(s|e^{*}\right)}\\
 & =1-\sum_{s\in S}\frac{\alpha_{s}}{p\left(s|e^{*}\right)}p\left(s|e_{i}\right)=\ell_{i}^{*},\forall i=1,\cdots,\left|E\right|.
\end{align*}
The above implies that if the principal is to choose $e^{*}$, $\mathbf{l}^{*}$
is feasible under the new information structure $\left(\hat{S},\hat{\pi}\right)$,
i.e., $\mathbf{l}^{*}\in\text{co}\left(\hat{\mathbf{p}}\right)$.
Hence, in order to establish our claim, it is sufficient to show that
for any alternative $e_{i}\neq e^{*}$, $W\left(e_{i},\hat{\pi}\right)\geq W\left(e_{i},\pi\right)$.
To show this, it is sufficient to show that
\[
\left(1-\frac{\hat{p}\left(L|e_{1}\right)}{\hat{p}\left(L|e^{*}\right)},\cdots,1-\frac{\hat{p}\left(L|e_{\left|E\right|}\right)}{\hat{p}\left(L|e^{*}\right)}\right)\in\text{co}\left(\mathbf{p}\right)
\]
This would imply that $\text{co}\left(\hat{\mathbf{p}}\right)$ is
a subset of $\text{co}\left(\mathbf{p}\right)$ since $\text{co}\left(\hat{\mathbf{p}}\right)$
is the line that connects the above point to $\mathbf{l}^{*}$.

We have
\begin{align*}
1-\frac{\hat{p}\left(L|e_{i}\right)}{\hat{p}\left(L|e^{*}\right)} & =1-\frac{\sum_{x}\left(1-\hat{\pi}\left(H|x\right)\right)f\left(x|e_{i}\right)}{\sum_{x}\left(1-\hat{\pi}\left(H|x\right)\right)f\left(x|e^{*}\right)}\\
 & =1-\frac{\sum_{x}\sum_{s}\left(1-\beta_{s}\right)\pi\left(s|x\right)f\left(x|e_{i}\right)}{\sum_{x}\sum_{s}\left(1-\beta_{s}\right)\pi\left(s|x\right)f\left(x|e^{*}\right)}\\
 & =1-\frac{\sum_{s}\left(1-\beta_{s}\right)p\left(s|e_{i}\right)}{\sum_{s}\left(1-\beta_{s}\right)p\left(s|e^{*}\right)}\\
 & =1-\sum_{s}\frac{\left(1-\beta_{s}\right)p\left(s|e^{*}\right)}{\sum_{s}\left(1-\beta_{s}\right)p\left(s|e^{*}\right)}\frac{p\left(s|e_{i}\right)}{p\left(s|e^{*}\right)},
\end{align*}
which establishes the claim. This concludes the proof.
\end{proof}
We can describe the intuition behind the above proof graphically.
Consider an implementable effort $e^{*}$ and suppose that its associated
information structure is as depicted in Figure \ref{fig:The-intuition-for};
the green area represents the convex hull corresponding to this information
structure, $\text{co}\left(\mathbf{p}\right)$. Suppose that point
$b$ is the point of optimality for the principal in $\text{co}\left(\mathbf{p}\right)$
if he were to implement $e^{*}$. The red line represents the new
information structure $\hat{\pi}$ -- since there are only two signal
realizations, this has to be a line. If the principal is to implement
$e^{*}$, since $b\in\text{co}\left(\hat{\mathbf{p}}\right)$ and
$b$ is chosen under $\pi$, $b$ remains optimal. Moreover, since
$\text{co}\left(\hat{\mathbf{p}}\right)\subset\text{co}\left(\mathbf{p}\right)$,
for any other effort $e_{i}\neq e^{*}$, minimized cost under $\hat{\pi}$
must be at least as high as that under $\pi$. This, in turn, implies
that $e^{*}$ is implementable under $\hat{\pi}$ and implements the
same outcome as under $\pi$.

\begin{figure}

\begin{centering}
\begin{tikzpicture}[scale=1.5]
\clip (-2.9,-2.9) rectangle (2.,2.);
\fill[fill=gray!20!white] (0.4,0.22) -- (0.58,-1.5) -- (-7/3,-7/3) -- (-2.5,0.5) -- cycle;
\filldraw[fill=green!20!white, draw=green!50!black,thick] (-1.8,0) -- (-1.9,-2) -- (-.7,-1.5) -- (0.2, -0.5) -- cycle;
\fill[fill=gray] (0.58,-1.5) circle [radius=0.05] node[anchor=west]{$a_4$};
\fill[fill=gray] (0.4,.22) circle [radius=0.05] node[anchor=west]{$a_3$};
\fill[fill=gray] (-7/3,-7/3) circle [radius=0.05]node[anchor=north]{$a_2$};
\fill[fill=gray] (-2.5,.5) circle [radius=0.05] node[anchor=east]{$a_1$};
\fill[fill=green!40!black] (-1.8,0.) circle [radius=0.05];
\fill[fill=green!40!black] (-1.9,-2) circle [radius=0.05];
\fill[fill=green!40!black] (-0.7,-1.5) circle [radius=0.05];
\fill[fill=green!40!black] (0.2, -0.5) circle [radius=0.05];
\fill[fill=red!60!black,text=red!60!black] (-.3,-.375) circle [radius=0.05] node[above]{$b$};
\fill[fill=red!60!black] (-1.7,-1.625) circle [radius=0.05];
\draw[red!60!black,  thick] (-.3,-.375) -- (-1.7,-1.625);
\draw[->, very thick] (-10,-1.) -- (1.5,-1.);
\draw[->, very thick] (-1.,-10) -- (-1.,1.5);
\draw[step=1,gray,very thin] (-5,-5) grid (1.5,1.5);
\draw (-1,1.7) node {$\ell_2$};
\draw (1.7,-1) node {$\ell_1$};
\end{tikzpicture}\caption{The intuition for the construction of a two-point signal\label{fig:The-intuition-for}}
\par\end{centering}
\end{figure}
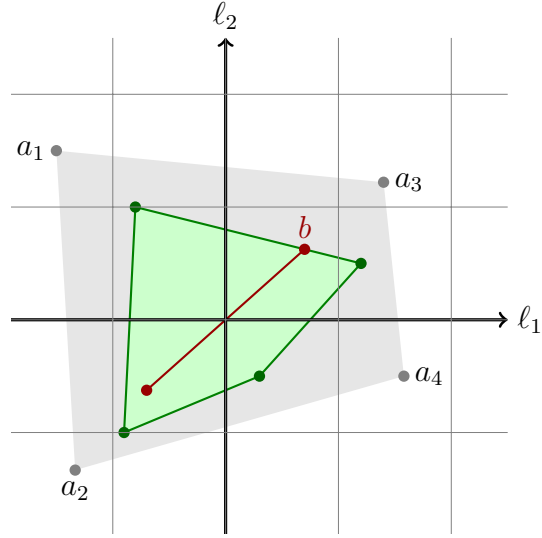

The above observations imply that if the principal prefers to implement
effort $e^{*}$ under the information structure $\pi$, he would prefer
the same under the binary information structure $\hat{\pi}$ and the
expected wage he has to pay to achieve this is the same under both
information structures. In what follows, we will use this result to
characterize the optimal information structures.

\textbf{Example 1 (Continued). }Recall Example \ref{exa:Suppose-that-,}
in which we argued that the principal -- under full revelation --
chooses to implement $e_{2}$ and thus his choice is inefficient (since
total surplus under $e_{2}$ is lower than that under $e_{3}$.)

It still remains to be seen whether $e_{3}$ is implementable and
how well the agent can do by choosing an information structure. By
Proposition \ref{prop:If--is}, we can focus on information structures
that only have two support points. Note that since the expected output
under $e_{1}$ is $\mathbb{E}\left[g(x)|e_{1}\right]=0.8$ and its
cost is 0, the principal can always guarantee $\mathbb{E}\left[g(x)|e_{1}\right]=0.8$. 

Now consider the point $\hat{\mathbf{l}}$ satisfying
\[
\frac{0.2}{\hat{\ell}_{2}}=\frac{0.3}{\hat{\ell}_{1}}=0.85
\]
At this point, the payoff to the principal of implementing each effort
level is given by
\begin{align*}
e_{3}:1.65-\max\left\{ \frac{0.2}{\hat{\ell}_{2}},\frac{0.3}{\hat{\ell}_{1}}\right\}  & =0.8\\
e_{2}:1.40-\left(1-\hat{\ell}_{2}\right)\cdot\frac{0.1}{\hat{\ell}_{1}-\hat{\ell}_{2}} & =1.40-0.65=0.75\\
e_{1}:0.8-0 & =0.8
\end{align*}
This implies that at $\hat{\mathbf{l}}$, $e_{3}$ maximizes the payoff
of the principal. Thus if $\hat{\mathbf{l}}\in\text{co}\left(\mathbf{f}\right)$,
we can choose an information structure that implements $e_{3}$. In
this example, this is indeed the case. Figure \ref{fig:Agent-optimal-information}
depicts the point $\hat{\mathbf{l}}$ as well as a two-point information
structure that implements it; the red line going through $\hat{\mathbf{l}}$.
If we set $S=\left\{ L,H\right\} $ and
\[
\pi\left(H|x\right)=\begin{cases}
\frac{27}{83}, & \text{if}\ x=x_{1}\\
\frac{15}{83}, & \text{if}\ x=x_{2}\\
\frac{41}{83}, & \text{if\ }x=x_{3}
\end{cases},\pi\left(L|x\right)=1-\pi\left(H|x\right),
\]
it can readily be checked that $\left(S,\pi\right)$ is associated
with the one depicted in Figure \ref{fig:Agent-optimal-information}.

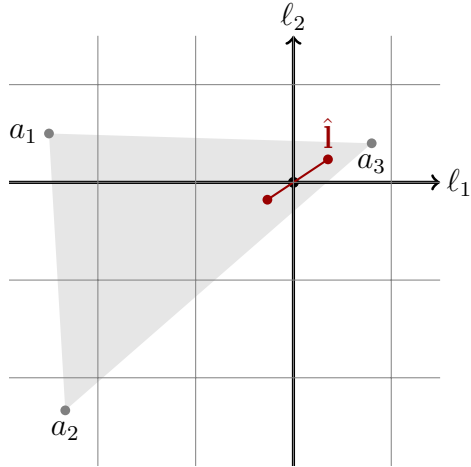
\begin{figure}[t]
\centering{}\begin{tikzpicture}[scale=1.3]
\clip (-2.9,-2.9) rectangle (2.7,2.7);   
\fill[fill=gray!20!white] (0.8,0.4) -- (-7/3,-7/3) -- (-2.5,0.5) -- cycle;   
\draw[->, very thick] (-10,0) -- (1.5,0);   
\draw[->, very thick] (0,-10) -- (0,1.5);   
\filldraw (0,0) circle [radius=0.05];
\fill[fill=red!60!black,text=red!60!black] (6/17,4/17) circle [radius=0.05] node[above]{$\hat{\mathbf{l}}$};
\fill[fill=red!60!black] (-0.2664,-0.1776) circle [radius=0.05];
\draw[red!60!black,thick] (6/17,4/17) -- (-0.2664,-0.1776);
\fill[fill=gray] (0.8,.4) circle [radius=0.05] node[below]{$a_3$};
\fill[fill=gray] (-7/3,-7/3) circle [radius=0.05]node[anchor=north]{$a_2$};
\fill[fill=gray] (-2.5,.5) circle [radius=0.05] node[anchor=east]{$a_1$};
\draw[step=1,gray,very thin] (-5,-5) grid (1.5,1.5);   
\draw (0,1.7) node {$\ell_2$};   
\draw (1.7,0) node {$\ell_1$};   
\end{tikzpicture}\caption{Agent-optimal information structure in Example 1 \label{fig:Agent-optimal-information}}
\end{figure}

The above example illustrates the agent's power the choice of the
information structure provides. By choosing the aforementioned information
structure, not only the agent is able to implement $e_{3}$, i.e.,
the efficient level of effort, but also she is able to capture the
entire surplus. In what follows, we show that under some conditions
on the performance technology, this is always possible.

\section{Efficient Surplus Extraction\label{sec:Efficient-and-Full}}

In this section, we provide sufficient conditions on the performance
technology $f\left(\cdot|\cdot\right)$ and cost function $c\left(\cdot\right)$
so that the agent is able to implement the first-best effort and extract
the entire surplus. 

Let $e^{*}$ be the first-best level of effort that satisfies
\begin{align*}
e^{*}\in\arg\max_{e\in E}\quad & \mathbb{E}\left[g(x)|e\right]-c\left(e\right).
\end{align*}
Suppose that the agent wishes to implement $e^{*}$ and capture the
entire surplus given by $\mathbb{E}\left[g(x)|e^{*}\right]-c\left(e^{*}\right)-\mathbb{E}\left[g(x)|e_{1}\right]$.
For this to occur, we need to choose $\text{co}\left(\mathbf{p}\right)$
and $\mathbf{l}^{*}\in\text{co}\left(\mathbf{p}\right)$ that satisfy
\begin{align*}
\mathbb{E}\left[g(x)|e_{1}\right] & =\mathbb{E}\left[g\left(x\right)|e^{*}\right]-\max_{i:\ell_{i}^{*}\geq0}\frac{c\left(e^{*}\right)-c\left(e_{i}\right)}{\ell_{i}^{*}}\\
 & \geq\max_{e_{j}\in E,\mathbf{l}\in\text{co}\left(\mathbf{p}\right)\cap\Omega_{j}}\mathbb{E}\left[g\left(x\right)|e_{j}\right]-\left(1-\ell_{j}\right)\max_{i:\ell_{i}>\ell_{j}}\frac{c\left(e_{j}\right)-c\left(e_{i}\right)}{\ell_{i}-\ell_{j}}
\end{align*}

Consider the likelihood vector $\mathbf{l}^{*}$ that satisfies the
following property:
\begin{align*}
\mathbf{\mathbb{E}}\left[g\left(x\right)|e^{*}\right]-\mathbb{E}\left[g(x)|e_{1}\right] & =\frac{c\left(e^{*}\right)-c\left(e_{i}\right)}{\ell_{i}^{*}},\forall i=1,\cdots,\left|E\right|.
\end{align*}
In words, this is a point in which the agent is indifferent among
all the efforts -- see the reformulation of (\ref{ICA}) in (\ref{eq:ICAr}).
Moreover, the principal's payoff is his outside option and thus if
the principal chooses to implement $e^{*}$ and chooses $\mathbf{l}^{*}\in\text{co}\left(\mathbf{p}\right)$,
the agent captures the entire surplus. In the following proposition,
we show that if $\mathbf{l}^{*}\in\text{co}\left(\mathbf{f}\right)$,
we can construct an information structure -- and its associated $\text{co}\left(\mathbf{p}\right)$
-- in which $\mathbf{l}^{*}$ is the best choice for the principal
and $e^{*}$ is the best level of effort.
\begin{thm}
\label{thm:Suppose-that-.}Suppose that $\mathbf{l}^{*}\in\text{co}\left(\mathbf{f}\right)$.
Then $e^{*}$ is implementable and there exists an information structure
$\left(S,\pi\right)$ for which the agent's payoff is $U_{A}=\mathbb{E}\left[g\left(x\right)|e^{*}\right]-\mathbb{E}\left[g\left(x\right)|e_{1}\right]-c\left(e^{*}\right)$
and the principal's payoff is $U_{P}=\mathbb{E}\left[g\left(x\right)|e_{1}\right]$,
i.e., the agent can capture the entire surplus.
\end{thm}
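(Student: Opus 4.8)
\emph{Proof plan.} The plan is to take the information structure whose likelihood set $\text{co}(\mathbf{p})$ is the segment joining the origin to $\mathbf{l}^{*}$. Write $\Delta=\mathbb{E}\left[g(x)|e^{*}\right]-\mathbb{E}\left[g(x)|e_{1}\right]$; the defining property of $\mathbf{l}^{*}$ says $\ell_{i}^{*}=\left(c(e^{*})-c(e_{i})\right)/\Delta$ for every $i$, so $\ell_{i^{*}}^{*}=0$ where $e_{i^{*}}=e^{*}$ and $\ell_{1}^{*}=c(e^{*})/\Delta$. The case $e^{*}=e_{1}$ is trivial ($\Delta=c(e^{*})=0$, and the claimed agent surplus is zero), so assume $\Delta>0$ and hence $\mathbf{l}^{*}\neq\mathbf{0}$. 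The identity $\sum_{x}f(x|e^{*})\left(1-f(x|e_{i})/f(x|e^{*})\right)=0$ (already used in the proof of Proposition~\ref{prop:For-any-information}) gives $\mathbf{0}\in\text{co}(\mathbf{f})$, so $C=\left\{ t\mathbf{l}^{*}:t\in[0,1]\right\}$ is a convex subset of $\text{co}(\mathbf{f})$ containing the origin with two extreme points; Proposition~\ref{prop:For-any-information} then yields an information structure $(S,\pi)$, necessarily binary, with $\text{co}(\mathbf{p})=C$.

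The engine of the verification is a proportionality fact along the ray. For any effort $e_{i}$ and any $t\in(0,1]$, at the point $t\mathbf{l}^{*}$ we have $\ell_{j}>\ell_{i}$ exactly when $c(e_{j})<c(e_{i})$, and for every such $j$,
\[
\frac{c(e_{i})-c(e_{j})}{\ell_{j}-\ell_{i}}=\frac{c(e_{i})-c(e_{j})}{t\left(\ell_{j}^{*}-\ell_{i}^{*}\right)}=\frac{\Delta}{t},
\]
so the inner maximum in the cost formula (\ref{wage_space}) equals $\Delta/t$ whenever it is over a nonempty set, and that set is empty only for $e_{i}=e_{1}$ (using the model's standing assumption that $e_{1}$ is the unique lowest-cost effort), in which case the cost is $0$. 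Since $\text{co}(\mathbf{p})=C$ lies on the ray, the feasible set $\text{co}(\mathbf{p})\cap\Omega_{i}$ for any $e_{i}$ is a subset of the segment; and the same proportionality shows $t\mathbf{l}^{*}\in\Omega_{i^{*}}$ for all $t\in(0,1]$, so $W(e^{*},\mathbf{p})=\min_{t\in(0,1]}\Delta/t=\Delta$, attained at $\mathbf{l}^{*}$. Thus the principal's payoff from implementing $e^{*}$ is $\mathbb{E}\left[g(x)|e^{*}\right]-\Delta=\underline{u}_{P}$. For any $e_{i}\neq e_{1}$ and any feasible $t\mathbf{l}^{*}$, the cost of implementing $e_{i}$ there is $(1-t\ell_{i}^{*})\Delta/t=\Delta/t-\left(c(e^{*})-c(e_{i})\right)$, so the principal's payoff satisfies
\[
\mathbb{E}\left[g(x)|e_{i}\right]-\frac{\Delta}{t}+c(e^{*})-c(e_{i})=\left(\mathbb{E}\left[g(x)|e_{i}\right]-c(e_{i})\right)+c(e^{*})-\frac{\Delta}{t}\leq\mathbb{E}\left[g(x)|e^{*}\right]-\frac{\Delta}{t}\leq\underline{u}_{P},
\]
by first-best optimality of $e^{*}$ and $t\leq1$; and implementing $e_{1}$ gives exactly $\underline{u}_{P}$. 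Hence the value of the principal's problem (\ref{principal_simplified}) equals $\underline{u}_{P}$ and $e^{*}$ attains it, so $e^{*}$ is implementable; when the principal implements $e^{*}$ he pays $W(e^{*},\mathbf{p})=\Delta$, so the agent gets $\Delta-c(e^{*})=\mathbb{E}\left[g(x)|e^{*}\right]-\mathbb{E}\left[g(x)|e_{1}\right]-c(e^{*})$ and the principal retains $\mathbb{E}\left[g(x)|e_{1}\right]=\underline{u}_{P}$.

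I expect the ``no profitable effort deviation for the principal'' step to be the main obstacle, and the proportionality identity is exactly what resolves it: it makes the $\max_{j}$ in (\ref{wage_space}) independent of $i$ along the ray, so that, together with first-best optimality of $e^{*}$, every continuation payoff is capped at $\underline{u}_{P}$. The one genuine loose end is that the principal is merely \emph{indifferent} between $e^{*}$ and his outside effort $e_{1}$ --- unavoidable whenever the agent extracts the full surplus --- so to pin down the stated payoffs I would invoke the usual convention that ties are broken in the agent's favor (or, if $\mathbf{l}^{*}$ is interior to $\text{co}(\mathbf{f})$, scale it slightly outward so $e^{*}$ becomes strictly optimal and recover the payoffs in the limit). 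The remaining items --- empty index sets and $0/0$ terms in the definitions of $\Omega_{i}$ and $W$ --- are routine.
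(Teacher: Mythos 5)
Your verification along the ray (the proportionality identity, the formula $W(e_{i},\mathbf{p})=\Delta/t-\bigl(c(e^{*})-c(e_{i})\bigr)$ at $t\mathbf{l}^{*}$, and the first-best comparison that caps every continuation payoff at $\underline{u}_{P}$) is exactly the paper's argument and is correct. The gap is in the construction step: the segment $C=\{t\mathbf{l}^{*}:t\in[0,1]\}$ is \emph{not} realizable as $\text{co}(\mathbf{p})$ for any information structure when $\mathbf{l}^{*}\neq\mathbf{0}$, and Proposition~\ref{prop:For-any-information} cannot be invoked for it. The reason is structural: for any information structure, the identity $\sum_{s}p(s|e^{*})\bigl(1-p(s|e_{i})/p(s|e^{*})\bigr)=0$ says the origin is a convex combination of the realized likelihood-ratio vectors with the \emph{strictly positive} weights $p(s|e^{*})$. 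So the origin must lie in the relative interior of $\text{co}(\mathbf{p})$, never at an extreme point. In your $C$, the origin is an extreme point; tracing through the construction in the proof of Proposition~\ref{prop:For-any-information}, Caratheodory forces $\tau_{\mathbf{l}^{*}}=0$, which makes the $H$ signal occur with probability zero under every effort, and the final step of that proof (which divides by $\tau_{\mathbf{z}}$) degenerates. You would end up with the fully uninformative structure, whose $\text{co}(\mathbf{p})$ is $\{\mathbf{0}\}$, not $C$.

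This is precisely why the paper's proof first invokes Lemma~\ref{lem:Let--represnt} (the origin is relatively interior to $\text{co}(\mathbf{f})$) and then takes $\text{co}(\mathbf{p})$ to be the segment from $-\alpha\mathbf{l}^{*}$ to $\mathbf{l}^{*}$ for a small $\alpha>0$, so that the origin sits strictly inside the segment and the two endpoint signals both receive positive weight. With that one change your argument goes through verbatim: the principal never chooses a point with $\ell_{1}<0$ (it fails $\Omega_{j}$ and raises cost), so the active portion of the segment is still $[\mathbf{0},\mathbf{l}^{*}]$ and your cost computations are unchanged. Your concluding remark about indifference is not actually a problem --- the paper's Definition~2 only asks that $e^{*}$ be \emph{a} solution of the principal's problem, so no tie-breaking convention or outward perturbation is needed.
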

\begin{proof}
First, we note that by Lemma \ref{lem:Let--represnt} in the Appendix,
$\mathbf{0}\in\text{co}\left(\mathbf{f}\right)$ is an interior point
of this convex set, where interiority is defined in an appropriatley
defined subspace of $\mathbb{R}^{m}$. We define the following convex
hull (and by Proposition \ref{prop:For-any-information}, its associated
information structure): 
\[
\text{co}\left(\mathbf{p}\right)=\left\{ \lambda\mathbf{l}^{*}+\left(1-\lambda\right)\left(-\alpha\mathbf{l}^{*}\right):\forall\lambda\in\left[0,1\right]\right\} ,
\]
where $\alpha>0$ is such that $-\alpha\mathbf{l}^{*}\in\text{co}\left(\mathbf{f}\right)$.
Such an $\alpha$ always exists due to $\mathbf{0}$ being an interior
point. Since $\text{co}\left(\mathbf{p}\right)$ is a line through
$\mathbf{l}^{*}$ and the origin, all of its members must satisfy
\begin{equation}
\forall i,j,\quad\frac{\ell_{i}}{\ell_{j}}=\frac{c\left(e^{*}\right)-c\left(e_{i}\right)}{c\left(e^{*}\right)-c\left(e_{j}\right)}=\frac{\ell_{i}^{*}}{\ell_{j}^{*}}\Rightarrow\frac{\ell_{i}}{\ell_{1}}=\frac{c\left(e^{*}\right)-c\left(e_{i}\right)}{c\left(e^{*}\right)},\forall i.\label{eq: line}
\end{equation}
This implies that the cost of choosing any point $\mathbf{l}\in\text{co}\left(\mathbf{p}\right)$
to implement any effort $e_{j}\in E$ is given by
\begin{align*}
\left(1-\ell_{j}\right)\max_{i:\ell_{i}\geq\ell_{j}}\frac{c\left(e_{j}\right)-c\left(e_{i}\right)}{\ell_{i}-\ell_{j}} & =\left(1-\frac{c\left(e^{*}\right)-c\left(e_{j}\right)}{c\left(e^{*}\right)}\ell_{1}\right)\frac{c\left(e^{*}\right)}{\ell_{1}}\\
 & =\frac{c\left(e^{*}\right)}{\ell_{1}}+c\left(e_{j}\right)-c\left(e^{*}\right).
\end{align*}

Since choice of $\mathbf{l}$ under $e_{j}$ must be a member of the
cone $\Omega_{j}$, we must have $\ell_{j}\leq\ell_{1}$ and this
combined with (\ref{eq: line}) implies that $\ell_{1}\geq0$. Thus
the above expression is minimized at $\mathbf{l}^{*}$. Hence, the
highest payoff of the principal from choosing $e_{j}$ is given by
\begin{align*}
\mathbb{E}\left[g\left(x\right)|e_{j}\right]-c\left(e_{j}\right)-\frac{c\left(e^{*}\right)}{\ell_{1}^{*}}+c\left(e^{*}\right) & =\\
\mathbb{E}\left[g\left(x\right)|e_{j}\right]-c\left(e_{j}\right)+c\left(e^{*}\right)-\mathbb{E}\left[g\left(x\right)|e^{*}\right]+\mathbb{E}\left[g(x)|e_{1}\right]
\end{align*}
The above is maximized at $e_{j}=e^{*}$, which implies that $e^{*}$
can be implemented with the chosen information structure. Moreover,
the payoff of the principal is given by $\mathbb{E}\left[g\left(x\right)|e_{1}\right]$.
This concludes the proof.
\end{proof}
The proof of the above theorem emphasizes the power of the agent when
$\text{\ensuremath{\mathbf{l}^{*}\in\text{co}\left(\mathbf{f}\right)}}$.
By being able to control the information structure, the agent can
control the wage that is needed for the principal to implement his
desired effort. By choosing $\mathbf{l}^{*}$, the agent is forcing
the principal to fully compensate the agent for her cost of effort.
This implies that the payoff of the principal becomes total surplus
shifted by a constant. Hence, it is optimal to choose the first best
level of effort.

While Theorem \ref{thm:Suppose-that-.} provides sufficient conditions
for implementability of the first-best effort and full surplus extraction,
it is not immediately evident what it imposes on the structure of
the model. In what follows, we try to shed light on this.

\paragraph*{Almost Perfect Performance Technology.}

Suppose that the performance technology satisfies the following property
\begin{align*}
X & =E\subset\mathbb{R}_{+},e_{1}=0,\\
f\left(e_{j}|e_{j}\right) & =1-\left(m-1\right)\varepsilon,\forall j=1,\cdots,m,\\
f\left(e_{i}|e_{j}\right) & =\varepsilon,\forall i\neq j,
\end{align*}
where $1/\left(m-1\right)>\varepsilon>0$. In words, the above performance
technology puts probability $\varepsilon$ on $e_{i}\neq e_{j}$ if
$e_{j}$ is chosen. As $\varepsilon$ converges to 0, it converges
to a setting where observing $x\in X$ fully reveals $e$, i.e., \emph{a
perfect performance technology}. Note that for $\varepsilon$ small
enough, the first-best level of effort is given by
\[
e_{j}\in\arg\max_{e\in E}e-c\left(e\right)
\]
Suppose that in the above $e_{j}>e_{1}=0$. Moreover, as $\varepsilon$
converges to 0, the point $\mathbf{l}^{*}$ converges to
\[
\ell_{i}^{*}=\frac{c\left(e_{j}\right)-c\left(e_{i}\right)}{e_{j}}
\]
Finally, note that for any $\varepsilon$, the set $\text{co}\left(\mathbf{f}\right)$
is the convex hull of the following likelihood ratios:
\[
1-\frac{f\left(e_{k}|e_{i}\right)}{f\left(e_{k}|e_{j}\right)}=\begin{cases}
1-\frac{\varepsilon}{1-\left(m-1\right)\varepsilon}, & \text{if}\ k=j,k\neq i\\
0, & \text{if\ }k=i=j\\
0, & \text{if\ }k\neq j,k\neq i\\
m-\frac{1}{\varepsilon}, & \text{if\ }k\neq j,k=i
\end{cases}
\]
As $\varepsilon$ converges to 0, the above set gets larger and the
point associated with $k=j$ converges to 
\[
\left(\underbrace{1,\cdots,1}_{j-1\text{ times}},0,1,\cdots,1\right),
\]
 while the points associated with $k\neq j$ converge to 
\[
\left(\underbrace{0,\cdots,0}_{k-1\text{ times}},-\infty,0,\cdots,0\right).
\]
This implies that as $\varepsilon$ converges to 0, $\text{co}\left(\mathbf{f}\right)\rightarrow\left(-\infty,1\right]^{j-1}\times\left\{ 0\right\} \times\left(-\infty,1\right]^{m-j}$.
This is depicted in Figure \ref{fig:Almost-Perfect-Performance}.
Since $\ell_{i}^{*}\leq\frac{c\left(e_{j}\right)}{e_{j}}<1$, for
$\varepsilon$ small enough, we must have that $\mathbf{l}^{*}\in\text{co}\left(\mathbf{f}\right)$.
We thus have the following corollary to Theorem \ref{thm:Suppose-that-.}:
\begin{cor}
\label{cor:Let--be}Let $f\left(\cdot|\cdot\right)$ be an almost
perfect performance technology satisfying 
\begin{align*}
X & =E\subset\mathbb{R}_{+},e_{1}=0,\\
f\left(e_{j}|e_{j}\right) & \leq1-\left(m-1\right)\varepsilon,\forall j=1,\cdots,m,\\
f\left(e_{i}|e_{j}\right) & \geq\varepsilon,\forall i\neq j.
\end{align*}
Then there exists $\overline{\varepsilon}$ such that for all $\varepsilon\leq\overline{\varepsilon}$,
the agent can implement the first-best level of effort and capture
the entire surplus.

\begin{figure}
\begin{centering}
\begin{tikzpicture}[scale=1.3]
\fill[pattern color=orange,  pattern=grid] (-3,1.) -- (1.,1.) --(1.,-3) -- (-3,-3) -- cycle;
\clip (-2.9,-2.9) rectangle (2.7,2.7);   
\fill[fill=gray!20!white] (-7,0) -- (0,-7) -- (7/8,7/8) -- cycle;   
\fill[fill=green!20!white] (-2,0) -- (0,-2) -- (2/3,2/3) -- cycle;
\draw[->, very thick] (-10,0) -- (1.5,0);   
\draw[->, very thick] (0,-10) -- (0,1.5);   
\filldraw (0,0) circle [radius=0.05];
\filldraw[fill=red!60!black,text=red!60!black] (0.8,0.7) circle [radius=0.05] node[below]{$\mathbf{l}^*$};
\draw[step=1,gray,very thin] (-5,-5) grid (1.5,1.5);   
\draw (0,1.7) node {$\ell_2$};   
\draw (1.7,0) node {$\ell_1$};  
\draw[thin] (-5,1.) -- (1.,1.) --(1.,-5);
\end{tikzpicture}
\par\end{centering}
\caption{\label{fig:Almost-Perfect-Performance}Almost Perfect Performance
Technology for $m=3$. The grey and green shaded areas represent $\text{co}\left(\mathbf{f}\right)$
as $\varepsilon$ becomes smaller. As $\varepsilon\rightarrow0$,
$\text{co}\left(\mathbf{f}\right)$ converges to the shaded quarter-space
south-west of $\left(1,1\right)$. }
\end{figure}
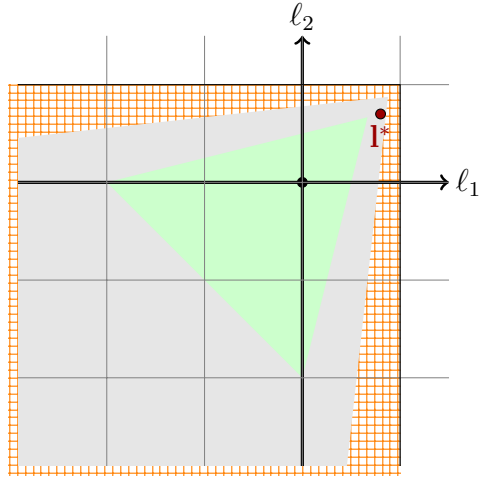
\end{cor}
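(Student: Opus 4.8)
The plan is to derive Corollary \ref{cor:Let--be} from Theorem \ref{thm:Suppose-that-.} by a monotonicity-and-limit argument: I want to show that the hypotheses of the corollary force the relevant point $\mathbf{l}^{*}$ (associated with the first-best effort) to lie in $\text{co}(\mathbf{f})$ for all sufficiently small $\varepsilon$. The first step is to verify that under the stated inequalities, the first-best effort $e_{j}$ with $e_j > 0$ is well-defined and stable for small $\varepsilon$: since $\mathbb{E}[g(x)\mid e_k] = \sum_i f(e_i \mid e_k) e_i \to e_k$ as $\varepsilon \to 0$ (using $g(x)=x$ and $X=E$), the objective $\mathbb{E}[g(x)\mid e_k]-c(e_k)$ converges to $e_k - c(e_k)$, so for $\varepsilon$ below some threshold the argmax is the same $e_j$ that maximizes $e-c(e)$; we assume this $e_j > e_1 = 0$. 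Correspondingly, $\ell_i^{*} = \bigl(\mathbb{E}[g(x)\mid e^{*}] - \mathbb{E}[g(x)\mid e_1]\bigr)^{-1}(c(e^{*})-c(e_i)) \to (c(e_j)-c(e_i))/e_j$, and in particular each coordinate satisfies $\ell_i^{*} \le c(e_j)/e_j < 1$ in the limit (using $c(e_1)=0$), hence $\ell_i^{*} < 1$ for all $i$ once $\varepsilon$ is small.

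The second step is the containment $\text{co}(\mathbf{f}) \supseteq$ (an approximation of) $(-\infty,1]^{j-1}\times\{0\}\times(-\infty,1]^{m-j}$. Here I would use the explicit computation of the extreme points already carried out in the excerpt: the generating points of $\text{co}(\mathbf{f})$ are, for each output index $k$, the vector with $k$-th coordinate zero, $i$-th coordinate $m - 1/\varepsilon$ (very negative for small $\varepsilon$) for $i=k$... more precisely the excerpt lists: for the point indexed by output $x=e_k$, the coordinate in direction $e_i$ is $1-\frac{f(e_k\mid e_i)}{f(e_k\mid e_j)}$, which equals $0$ when $i=k$, equals $1 - \frac{\varepsilon}{1-(m-1)\varepsilon} \to 1$ when $k=j,\ i\neq j$... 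I need to be careful and instead argue as follows: I want to exhibit, for each target coordinate pattern near a vertex of the limiting region, a convex combination of the $\text{co}(\mathbf{f})$-generators that approximates it. Since the generator indexed by $k=j$ converges to $(1,\dots,1,0,1,\dots,1)$ (zero in the $j$-th slot) and each generator indexed by $k\neq j$ has a coordinate tending to $-\infty$, taking a convex combination that puts weight $1-\delta$ on the $k=j$ generator and small weights on the others lets me reach any point whose $j$-th coordinate is $0$ and whose other coordinates are at most $1$ and can be pushed arbitrarily negative; by convexity the whole region $[-M,1]^{j-1}\times\{0\}\times[-M,1]^{m-j}$ is contained in $\text{co}(\mathbf{f})$ for $\varepsilon$ small (with $M\to\infty$).

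The third step combines the first two: choose $M$ large enough that $\mathbf{l}^{*}_{\text{limit}}$ (which has $j$-th coordinate $0$ and every other coordinate in $[c(e_j)/e_j - 1, c(e_j)/e_j]\subset[-1,1)$, hence in $[-M,1]$) lies in the interior of this box; then by continuity of $\mathbf{l}^{*}$ in $\varepsilon$ and the convergence of $\text{co}(\mathbf{f})$, there is $\overline{\varepsilon}$ such that for all $\varepsilon \le \overline{\varepsilon}$ we have $\mathbf{l}^{*} \in \text{co}(\mathbf{f})$. Applying Theorem \ref{thm:Suppose-that-.} then yields that $e^{*}$ is implementable and the agent captures the entire surplus, which is exactly the claim. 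One should also note the corollary states the performance technology with inequalities ($f(e_j\mid e_j)\le 1-(m-1)\varepsilon$, $f(e_i\mid e_j)\ge\varepsilon$) rather than equalities; since these inequalities only make $\text{co}(\mathbf{f})$ at least as large as in the exact-equality benchmark in the relevant directions (the off-diagonal likelihood ratios become even more negative), the containment argument only gets easier, so it is harmless to reduce to the equality case or simply note the generators still straddle the required region.

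The main obstacle I anticipate is making the "$\text{co}(\mathbf{f})$ converges to the quarter-space" claim rigorous rather than pictorial: one must quantify, as a function of $\varepsilon$, how far the generators of $\text{co}(\mathbf{f})$ extend in each coordinate direction and confirm that a fixed target point $\mathbf{l}^{*}$ (which itself drifts with $\varepsilon$ but stays in a compact set bounded away from the boundary face $\ell_i = 1$) gets swallowed. This is essentially a uniform-continuity bookkeeping exercise: bound $\|\mathbf{l}^{*}(\varepsilon) - \mathbf{l}^{*}(0)\|$ by $O(\varepsilon)$, bound the Hausdorff distance between the truncation $\text{co}(\mathbf{f})\cap[-M,1]^{m}$ (restricted to the subspace $\ell_j = 0$) and the box $[-M,1]^{j-1}\times\{0\}\times[-M,1]^{m-j}$ by $O(\varepsilon)$ for fixed $M$, and then pick $\overline{\varepsilon}$ accordingly. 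Everything else is a direct invocation of the already-proved Theorem \ref{thm:Suppose-that-.} and of Proposition \ref{prop:For-any-information} for the existence of the implementing information structure.
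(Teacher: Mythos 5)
Your approach is substantively the same as the paper's: the paper's ``proof'' of Corollary~\ref{cor:Let--be} is the discussion in the paragraph preceding it (compute the limit $\ell_i^{*}\to (c(e_j)-c(e_i))/e_j < 1$, compute the limit of $\text{co}(\mathbf{f})$ as $\varepsilon\to 0$, note the containment for small $\varepsilon$, invoke Theorem~\ref{thm:Suppose-that-.}), and your three main steps reproduce exactly that argument with additional bookkeeping. That part is fine.

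However, your final paragraph --- the attempt to reconcile the corollary's inequalities with the equality benchmark used in the preceding discussion --- contains a genuine error, and the reasoning runs in the wrong direction. The constraints $f(e_j\mid e_j)\le 1-(m-1)\varepsilon$ and $f(e_i\mid e_j)\ge\varepsilon$ are a \emph{floor on noise}, not a ceiling: they say the technology is at least $\varepsilon$ away from perfectly revealing, and they impose no upper bound on the off-diagonal entries. Consequently, they do \emph{not} make $\text{co}(\mathbf{f})$ larger than in the equality case; in general they allow it to be dramatically smaller. A concrete counterexample to your claim: for any $\varepsilon\le 1/m$, the fully uninformative technology $f(e_i\mid e_j)=1/m$ for all $i,j$ satisfies both inequalities, yet there every generating point $1-\frac{f(x\mid e_i)}{f(x\mid e^{*})}$ is $0$, so $\text{co}(\mathbf{f})=\{\mathbf{0}\}$ and no non-trivial $\mathbf{l}^{*}$ lies in it. Even short of that extreme, notice that for the generator indexed by $k=j$, the $i$-th coordinate $1-\frac{f(e_j\mid e_i)}{f(e_j\mid e_j)}$ is only bounded \emph{above} by $1-\frac{\varepsilon}{1-(m-1)\varepsilon}$; under the inequalities $f(e_j\mid e_i)$ can be large, pushing this coordinate well below $1$, so this generator need not converge to $(1,\dots,1,0,1,\dots,1)$. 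Your phrase ``the off-diagonal likelihood ratios become even more negative'' is backwards: the bound $1-\frac{f(e_k\mid e_k)}{f(e_k\mid e_j)}\ge m-1/\varepsilon$ is a \emph{lower} bound, so the inequalities make those coordinates \emph{less} negative, if anything. You cannot ``reduce to the equality case'' by this argument. To be fair, the paper itself only carries out the equality computation and then states the corollary with inequalities, so the gap you need to repair is arguably present in the source as well; but as written your resolution does not close it, and the correct observation is that either the corollary should be stated with equalities (or with two-sided bounds such as $\varepsilon \le f(e_i\mid e_j)\le K\varepsilon$), or one must add an argument showing $\text{co}(\mathbf{f})$ is still large enough in the relevant directions.
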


\section{Continuous Effort and Output\label{sec:Continuous-Effort-and-1}}

In this section, we consider a version of the model from Section \ref{sec:Model}
where the effort space and the output space are continuous. Applying
a  first-order approach, we derive sufficient conditions such that
the optimal indicator structure takes the form of monotone or hump-shaped
threshold signals. In other words, we derive conditions such that
the optimal information structure is characterized by an indicator
with at most two thresholds.

Consider a principal employing an agent to perform a task whose output
is represented by a real number $x\in X=[0,1]$. The agent chooses
effort $e\in E=[0,1]$ to perform the task. The agent's effort choice
induces a probability distribution $f(x|e)$ over the output space
$X$, whose density function $f(x|e)$ is assumed to be twice differentiable
with respect to $e$ for every $x\in X$. Effort is costly to the
agent; the cost of effort $e$ is given by $c(e)$ for some real-valued
function $c:E\rightarrow\mathbb{R}_{+}$ where $c'(e)\geq0,\ c''(e)>0,\ \forall e\in E$.
For simplicity, let $c(0)=0$. The timing and payoff functions of
the game are otherwise the same as in the model in Section \ref{sec:Model}.
To proceed, we assume that the first-order approach is valid for all
the optimization problems faced by the agent and the principal.

We characterize this continuous version of the game in the same manner
as the discrete case. Given the result in Proposition \ref{prop:If--is},
one can use an approximation argument to show that optimal signal
is binary, i.e., $S=\{L,H\}$. As a result, let $p\left(e\right):E\rightarrow\left[0,1\right]$
represent the probability of $s=H$ induced by a given information
structure $(S,\pi)$ and the underlying probability distribution of
outcomes given effort $f(x|e)$. Note, for any level of effort $e$,
$p\left(e\right)=\int_{0}^{1}f\left(x|e\right)\pi\left(H|x\right)dx$.
Differentiating the stochastic mapping $p$ with respect to $e$ yields
\[
p'\left(e\right)=\int_{0}^{1}f_{e}\left(x|e\right)\pi\left(H|x\right)dx,\ \forall e\in E.
\]

The agent's problem in the last stage of the game, given an information
structure and a compensation scheme, is
\[
\max_{e\in E}\ p\left(e\right)w\left(H\right)+\left[1-p\left(e\right)\right]w\left(L\right)-c\left(e\right)
\]
The first-order condition characterizing the agent's optimal effort
is
\[
p'\left(e\right)\left[w\left(H\right)-w\left(L\right)\right]=c'(e).
\]

For any desired level of effort, the principal chooses a compensation
scheme that solves
\begin{align}
\begin{split}\min_{w}\  & p\left(e\right)w\left(H\right)+\left(1-p\left(e\right)\right)w\left(L\right)\\
\text{s.t}\  & p'\left(e\right)\left[w\left(H\right)-w\left(L\right)\right]=c'\left(e\right),\\
 & w\left(H\right),w\left(L\right)\geq0.
\end{split}
\label{eq: Principal Problem}
\end{align}

If $\lambda(e,\pi)$ denotes the Lagrange multiplier on the agent's
incentive compatibility constraint in (\ref{eq: Principal Problem}),
then 
\[
\lambda(e,\pi)=\min\left\{ \frac{p\left(e\right)}{p'\left(e\right)},-\frac{1-p\left(e\right)}{p'\left(e\right)}\right\} .
\]
 Exactly as in the discrete case, the above reveals that the agent's
choice of information structure, which induces a set of likelihood
ratios dependent on the stochastic mapping $p$, determines the principal's
shadow cost of incentivizing a given effort level.

If we define the expected payoff to the principal from a given effort
level $e$ as $\mathbb{E}\left[g(x)|e\right]=g(x)f(x|e)dx$, then
the principal's optimal choice of effort solves
\[
\max_{e}\mathbb{E}\left[g\left(x\right)|e\right]-\lambda\left(e,\pi\right)c'\left(e\right)
\]
with associated optimality condition

\[
\frac{\partial\mathbb{E}\left[g(x)|e\right]}{\partial e}-\frac{\partial\lambda(e,\pi)}{\partial e}c'(e)-\lambda(e,\pi)c''(e)=0.
\]

Using the above sequence of optimality conditions, we obtain the following
optimization problem that describes the agent's choice of effort and
information structure in the first stage of the game:
\begin{align}
\begin{split}\max_{e,\pi}\  & \lambda(e,\pi)c'(e)-c(e)\ \text{s.t.}\\
 & \frac{\partial\mathbb{E}[g(x)|e]}{\partial e}-\frac{\partial\lambda(e,\pi)}{\partial e}c'(e)-\lambda(e,\pi)c''(e)=0,\\
 & \frac{1}{\lambda(e,\pi)}=\max\left\{ \frac{\int_{0}^{1}f(x|e)\pi(x)dx}{\int_{0}^{1}f_{e}(x|e)\pi(x)dx},-\frac{1-\int_{0}^{1}f(x|e)\pi(x)dx}{\int_{0}^{1}f_{e}(x|e)\pi(x)dx}\right\} .
\end{split}
\label{eq:Agent Problem lambda}
\end{align}
The following assumption allows us to provide a sharp characterization
of the optimal information structure:
\begin{assumption}
\label{assu:Given-any-effort}Given any effort $e\in E$, the likelihood
$\frac{f_{e}(x|e)}{f(x|e)}$ is strictly monotone in output $x$ and
its derivative $\frac{\partial}{\partial e}\frac{f_{e}(x|e)}{f(x|e)}$
is a convex function of the likelihood $\frac{f_{e}(x|e)}{f(x|e)}$.
\end{assumption}
Several distribution functions satisfy Assumption \ref{assu:Given-any-effort}.
Examples include power distributions: $f\left(x|e\right)=ex^{e-1}$,
and truncated exponential distributions: $f\left(x|e\right)=\frac{d}{dx}\frac{e^{x}-1}{e-1}$.

Our main characterization of the optimal information structure is
as follows:
\begin{prop}
Suppose that Assumption \ref{assu:Given-any-effort} holds. Then,
the equilibrium information structure is characterized by at most
two thresholds in the output space. If the equilibrium information
structure has a single threshold, say $x^{*}$, then $\pi\left(H|x\right)=1$
if and only if $x\geq x^{*}$. If the equilibrium information structure
has two thresholds, say $(x_{1}^{*},x_{2}^{*})$ then $\pi\left(H|x\right)=1$
if and only if $x\in[x_{1}^{*},x_{2}^{*}]$.
\end{prop}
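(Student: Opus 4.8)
The plan is to combine the binary-signal reduction with a first-order (bang-bang) analysis of the agent's program (\ref{eq:Agent Problem lambda}), whose punchline is that the set on which the optimal $\pi(H|x)=1$ is cut out by a single inequality in the likelihood $f_{e}/f$; Assumption~\ref{assu:Given-any-effort} then guarantees that this inequality changes sign at most twice.

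First I would invoke Proposition~\ref{prop:If--is} together with the approximation argument described above to restrict to $S=\{L,H\}$, writing $\pi(x):=\pi(H|x)\in[0,1]$, $p(e)=\int_{0}^{1}f(x|e)\pi(x)\,dx$, $p'(e)=\int_{0}^{1}f_{e}(x|e)\pi(x)\,dx$ and $p''(e)=\int_{0}^{1}f_{ee}(x|e)\pi(x)\,dx$ (the last well defined by twice differentiability). The key observation is that the objective $\lambda(e,\pi)c'(e)-c(e)$ and the principal-optimality constraint in (\ref{eq:Agent Problem lambda}) depend on $\pi$ only through the triple $\big(p(e),p'(e),p''(e)\big)$: the multiplier $\lambda$ is a function of $\big(p(e),p'(e)\big)$ alone, its derivative $\partial\lambda/\partial e$ a function of $\big(p(e),p'(e),p''(e)\big)$ alone, while $\partial\mathbb{E}[g(x)|e]/\partial e$, $c'(e)$ and $c''(e)$ do not involve $\pi$.

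Next, letting $(e^{*},\pi^{*})$ be an equilibrium and attaching a multiplier $\mu$ to the principal-optimality constraint, form the Lagrangian $\mathcal{L}(e,\pi,\mu)$. Because $\mathcal{L}$ sees $\pi$ only through $\big(p(e),p'(e),p''(e)\big)$, and since $\delta p(e)/\delta\pi(x)=f(x|e)$, $\delta p'(e)/\delta\pi(x)=f_{e}(x|e)$, $\delta p''(e)/\delta\pi(x)=f_{ee}(x|e)$, the variational derivative of $\mathcal{L}$ in $\pi$, evaluated at $e^{*}$, has the form
\[
A\,f(x|e^{*})+B\,f_{e}(x|e^{*})+C\,f_{ee}(x|e^{*}),
\]
where the scalars $A,B,C$ (the partials of $\mathcal{L}$ in $p,p',p''$ at the optimum) are independent of $x$. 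By validity of the first-order approach, $\pi^{*}$ is then bang-bang: $\pi^{*}(x)=1$ where this expression is strictly positive and $0$ where it is strictly negative (the equality set being negligible outside degenerate cases, which give a trivial structure). Dividing by $f(x|e^{*})>0$ and writing $\ell(x):=f_{e}(x|e^{*})/f(x|e^{*})$, the switching rule reads $\pi^{*}(x)=1\iff A+B\,\ell(x)+C\,\frac{f_{ee}(x|e^{*})}{f(x|e^{*})}>0$. Using the identity
\[
\frac{f_{ee}(x|e)}{f(x|e)}=\frac{\partial}{\partial e}\frac{f_{e}(x|e)}{f(x|e)}+\left(\frac{f_{e}(x|e)}{f(x|e)}\right)^{2},
\]
and the two parts of Assumption~\ref{assu:Given-any-effort} — strict monotonicity of $x\mapsto\ell(x)$, so that $\partial_{e}(f_{e}/f)$ at $e^{*}$ equals $\phi(\ell)$ for a function $\phi$, and convexity of that $\phi$ — the switching condition becomes $\Psi(\ell(x))>0$ with $\Psi(\ell):=C\phi(\ell)+C\ell^{2}+B\ell+A$, which is convex if $C\ge 0$, concave if $C\le 0$, and affine if $C=0$. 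In each case the super-level set $\{\Psi>0\}$ is an interval, the complement of an interval, or a half-line (ignoring the trivial empty/full possibilities); pulling it back through the strictly monotone map $\ell(\cdot)$ gives a subset of $[0,1]$ of the same shape, so $\{x:\pi^{*}(H|x)=1\}$ has at most two boundary points in $(0,1)$, i.e.\ at most two thresholds. Finally, relabeling the two signal realizations if necessary — which leaves the information structure, and hence all payoffs, unchanged — puts this set in one of the two canonical forms: $\pi^{*}(H|x)=1\iff x\ge x^{*}$ when there is a single threshold, and $\pi^{*}(H|x)=1\iff x\in[x_{1}^{*},x_{2}^{*}]$ when there are two.

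The main obstacle is really the observation in the third paragraph: that the whole program depends on $\pi$ only through $\big(p(e),p'(e),p''(e)\big)$, which is what collapses the variational derivative to the fixed combination $Af+Bf_{e}+Cf_{ee}$ — elementary once seen, but the source of all the tractability. The second place where something genuine happens is the use of Assumption~\ref{assu:Given-any-effort} to turn an a priori arbitrary measurable switching set into $\{\Psi(\ell)>0\}$ for a $\Psi$ with a single sign of curvature, which is precisely what bounds the number of thresholds by two. The remaining ingredients — rigor of the first-order approach, the non-generic equality sets, and the cosmetic orientation of the thresholds — are either supplied by the maintained assumptions or dealt with by relabeling.
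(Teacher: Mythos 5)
Your proposal is correct and follows essentially the same route as the paper: form the Lagrangian with a multiplier on the principal's optimality constraint, observe that the variational derivative in $\pi(\tilde x)$ collapses to the fixed combination $A f + B f_e + C f_{ee}$, divide by $f$, substitute $f_{ee}/f = \partial_e(f_e/f) + (f_e/f)^2$, and invoke Assumption~\ref{assu:Given-any-effort} to conclude the switching function is convex or concave in the likelihood, hence changes sign at most twice. Your framing via the triple $(p(e),p'(e),p''(e))$ is a slightly cleaner way to see why the variational derivative has that form, but the substance is identical, including the handling of the degenerate always-positive/always-negative case.
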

\begin{proof}
See the Appendix.
\end{proof}
Given an effort choice, the agent's optimal choice of information
structure requires the agent to choose the probability of being paid
at any output such that there is no net marginal benefit from a marginal
change in this probability. Generally, it is not possible to satisfy
this condition for all the output levels. The agent ends up choosing
the extreme probabilities for every output, depending on whether her
net marginal benefit is increasing or decreasing in the probability
of being paid at that output level. Assumption 2 ensure that the output
intervals at which the agent uses either of the extreme probabilities
are structured such that the optimal information structure takes the
form of monotone or hump-shaped thresholds signals.
\begin{example}
Let $f(x|e)=3ex^{3e-1}$. As mentioned earlier, this distribution
satisfies Assumption \ref{assu:Given-any-effort}. If the effort cost
is $c(e)=\frac{e^{2}}{2}$, the equilibrium information structure
is characterized by $x^{*}=0.45$ where $\pi\left(H|x\right)=1$ if
and only if $x\geq x^{*}$. The principal pays the agent $w=0.1048$
if he receives the high signal; otherwise, he pays nothing. This choice
implements effort $e^{*}=0.2725$ yielding the following payoffs:
$U_{P}=0.3450$, $U_{A}=0.0677$. For comparison, the first-best effort
is $e=0.4708$. In this case, and opposite to that of section \ref{sec:Efficient-and-Full},
the agent is unable to implement the efficient outcome. 
\end{example}

\section{Conclusion\label{sec:Conclusion}}

In this paper, we have developed the theoretical tool in the design
of contracts in principal-agent settings. Part of the contract design
is what indicators should be used as contingencies for payments. Despite
the importance of this question and its relevance for analysis of
contracting decisions, this part of the contracting procedure is less
explored. 

Our paper has two broad implications. First, our methodology of thinking
about likelihood ratios can be applied to other settings in which
considerations of communication off the equilibrium path are important.
In our setup, unlike other models of communication and information
design, the design of information structure for the equilibrium level
of effort affects off-path communication and the ability of the principal
and the agent to capture surplus. This can arise in other settings
with strategic information transmission and our method can be useful
for that.

Second, our paper ties the choice of indicators in contracting to
the bargaining power of the parties. In the textbook moral hazard
problem, principal makes a take-it-or-leave it offer. As a result,
a version of the informativeness principle often holds; the principal
wishes to use all the information available, if possible, as contingency
for payments. In contrast, in our model, the agent has different incentives
for choice of indicators. There are some casual observations that
are in line with this explanation. For example, contracts in NFL are
often extremely detailed and payments to football players are highly
contingent on various measures of individual and team outcomes. In
contrast, contracts found in the English Premier League are not as
detailed. They are often contingent on very coarse personal outcomes
such as the number of goals scored reaching a particular threshold.
In light of our theory, the level of competition in English/European
soccer (in the form of increased player bargaining power) compared
to a lack thereof in NFL could be behind this observation. Future
work can hopefully shed light on the importance of this channel in
the data.

\newpage{}

\bibliographystyle{ecta}
\bibliography{moralhazardrefs}

\section{Appendix}
\begin{lem}
\label{lem:Let--represnt}Let $T$ represent the lowest-dimensional
linear subspace in $\mathbb{\mathbb{R}}^{|E|}$ that contains $\text{co}\left(\mathbf{f}\right)$.
If $f(\cdot|e^{*})$ is full-support, then the origin is an interior
point of $co\left(\mathbf{f}\right)$ with respect to $T$.
\end{lem}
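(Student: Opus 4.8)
The plan is to exhibit the origin as a convex combination of the generators of $\text{co}\left(\mathbf{f}\right)$ with strictly positive weights, and then to invoke the standard description of the relative interior of a polytope. Write $a_{x}=\bigl(1-\tfrac{f(x|e_{1})}{f(x|e^{*})},\dots,1-\tfrac{f(x|e_{|E|})}{f(x|e^{*})}\bigr)$ for $x\in X$, so that $\text{co}\left(\mathbf{f}\right)=\text{conv}\left\{ a_{x}:x\in X\right\} $, a polytope since $X$ is finite. The first step is the one-line computation that, coordinate by coordinate, $\sum_{x}f(x|e^{*})\,a_{x}=\bigl(1-\sum_{x}f(x|e_{i})\bigr)_{i=1}^{|E|}=\mathbf{0}$, together with $\sum_{x}f(x|e^{*})=1$. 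By the full-support hypothesis on $f(\cdot|e^{*})$, every coefficient $f(x|e^{*})$ is strictly positive, so this exhibits $\mathbf{0}$ as a convex combination of \emph{all} the extreme generators of $\text{co}\left(\mathbf{f}\right)$ with strictly positive weights.

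The second step invokes the classical fact (see \citet{rockafellar1970convex}) that the relative interior of the convex hull of finitely many points consists exactly of those points expressible as a convex combination of all of them with strictly positive coefficients; hence $\mathbf{0}$ lies in the relative interior of $\text{co}\left(\mathbf{f}\right)$. The third step identifies this relative interior with the interior relative to $T$: since $\mathbf{0}\in\text{co}\left(\mathbf{f}\right)$, the affine hull of $\text{co}\left(\mathbf{f}\right)$ passes through the origin and is therefore a linear subspace; it contains $\text{co}\left(\mathbf{f}\right)$ and is contained in every linear subspace that does, so it coincides with $T$. Relative interior is by definition interior with respect to the affine hull, so the previous step yields exactly that $\mathbf{0}$ is an interior point of $\text{co}\left(\mathbf{f}\right)$ with respect to $T$.

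All the computations here are routine; the only points requiring care are that the relative-interior characterization genuinely needs $X$ finite (which is assumed) and that the identification of $T$ with the affine hull must be carried out \emph{after} establishing $\mathbf{0}\in\text{co}\left(\mathbf{f}\right)$, since it is precisely membership of the origin that forces the affine hull to be a linear subspace. Hence the ordering of the last two steps is the only real subtlety, and I do not anticipate any serious obstacle.
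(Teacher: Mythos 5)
Your proof is correct, and it takes a genuinely different route from the paper's. Both proofs begin with the same one-line computation that $\sum_{x}f(x|e^{*})\,a_{x}=\mathbf{0}$ and observe that all the weights $f(x|e^{*})$ are strictly positive. From there the paper argues by contradiction: if $\mathbf{0}$ were not interior relative to $T$, the supporting hyperplane theorem would produce a hyperplane through the origin (within $T$) with $\text{co}(\mathbf{f})$ on one side; since $\mathbf{0}$ is a strictly positive combination of the $a_{x}$, all the $a_{x}$ would then have to lie on that hyperplane, so $\text{co}(\mathbf{f})$ would sit inside a proper subspace of $T$, contradicting minimality of $T$. You instead invoke directly the characterization (Rockafellar, Theorem 6.9 specialized to singletons) that the relative interior of the convex hull of finitely many points is exactly the set of strictly positive convex combinations of those points, and then identify the relative interior with interior in $T$ by noting that once $\mathbf{0}\in\text{co}(\mathbf{f})$ the affine hull is a linear subspace and hence coincides with $T$. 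Your route is shorter and avoids a contradiction argument, at the cost of invoking a slightly more specialized result from convex analysis; the paper's route uses only the more elementary supporting hyperplane theorem. One minor imprecision: the weights $\{f(x|e^{*})\}_{x\in X}$ give a strictly positive combination of \emph{all} the generators $a_{x}$, not specifically the extreme points of the polytope (some $a_{x}$ may be redundant), but this is exactly what the cited characterization requires, so the argument is unaffected.
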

\begin{proof}
Notice that the origin can be written as a convex combination of the
points defining $\text{co}\left(\mathbf{f}\right)$ with weights $f(x|e^{*})$:

\begin{align*}
\sum_{x}f(x|e^{*})\bigg(1-\frac{f(x|e_{i})}{f(x|e^{*})}\bigg) & =\sum_{x}f(x|e^{*})-\sum_{x}f(x|e_{i})=1-1=0,\ \forall i,
\end{align*}
where $f(x|e^{*})>0,\ \forall x\in X$ by the full-support assumption.
Therefore, the origin is always included in the convex set $\text{co}\left(\mathbf{f}\right)$.
Suppose by contradiction that the origin is not an interior point
of $\text{co}\left(\mathbf{f}\right)$ with respect to $T$. By the
supporting hyperplane theorem, there exists a hyperplane in the linear
subspace $T$ that contains the origin and $\text{co}\left(\mathbf{f}\right)$
is entirely contained in one of the two closed half-spaces bounded
by the hyperplane. However, since $f(x|e^{*})>0,\ \forall x\in X$,
this is possible only if $\text{co}\left(\mathbf{f}\right)$ is entirely
contained in this hyperplane. This is a contradiction because $T$
is the lowest-dimensional linear subspace in $\mathbb{\mathbb{R}}^{|E|}$
that contains $co\left(\mathbf{f}\right)$.
\end{proof}
\textbf{Proof of Proposition 3} : Let $\pi(x)=\pi(H|x),\ \forall x\in X$.
Without loss of generality, for any desired implementable effort level
$e$, we impose  $\int_{X}f_{e}(x|e)\pi(x)dx\geq0$. Consequently,
we have 
\[
\lambda(e,\pi)=\frac{\int_{X}f(x|e)\pi(x)dx}{\int_{X}f_{e}(x|e)\pi(x)dx}\geq0.
\]
We may then write the agent's problem in (\ref{eq:Agent Problem lambda})
as 
\begin{align}
\begin{split}\max_{e,\pi}\  & \lambda(e,\pi)c'(e)-c(e)\ \text{s.t.}\\
 & \frac{\partial E[g(x)|e]}{\partial e}-\frac{\partial\lambda(e,\pi)}{\partial e}c'(e)-\lambda(e,\pi)c''(e)=0,\\
 & \lambda(e,\pi)=\frac{\int_{X}f(x|e)\pi(x)dx}{\int_{X}f_{e}(x|e)\pi(x)dx},\\
 & \lambda(e,\pi)\geq0,\ 0\leq\pi(x)\leq1,\ \forall x\in X.
\end{split}
\label{eq: Agent Problem Binary}
\end{align}
We write the Lagrangian corresponding to (\ref{eq: Agent Problem Binary}),
momentarily ignoring the inequality constraints: 
\[
\mathcal{L}(e,\pi,\eta)=\lambda(e,\pi)c'(e)-c(e)+\eta\bigg[\frac{\partial\mathbb{E}[g(x)|e]}{\partial e}-\frac{\partial\lambda(e,\pi)}{\partial e}c'(e)-\lambda(e,\pi)c''(e)\bigg].
\]
For every output $\tilde{x}\in X$, the agent's optimal information
structure in (\ref{eq: Agent Problem Binary}) must satisfy

\begin{align}
\frac{\partial\mathcal{L}(e,\pi,\eta)}{\partial\pi(\tilde{x})}=0, & \ \text{if}\ 0<\pi(\tilde{x})<1,\nonumber \\
\frac{\partial\mathcal{L}(e,\pi,\eta)}{\partial\pi(\tilde{x})}\geq0, & \ \text{if}\ \pi(\tilde{x})=1,\label{eq: FOC pi}\\
\frac{\partial\mathcal{L}(e,\pi,\eta)}{\partial\pi(\tilde{x})}\leq0, & \ \text{if}\ \pi(\tilde{x})=0.\nonumber 
\end{align}
With some algebra, we get
\begin{proof}
\begin{equation}
\frac{\partial\mathcal{L}(e,\pi,\eta)}{\partial\pi(\tilde{x})}=f(\tilde{x}|e)\bigg[A_{1}(e,\pi,\eta)-A_{2}(e,\pi,\eta)\frac{f_{e}(\tilde{x}|e)}{f(\tilde{x}|e)}+A_{3}(e,\pi,\eta)\frac{f_{ee}(\tilde{x}|e)}{f(\tilde{x}|e)}\bigg]\label{eq: FOC Simple}
\end{equation}
where $A_{1},A_{2},A_{3}$ are some functions independent of $\tilde{x}$.
Note that since 
\[
\frac{f_{ee}(x|e)}{f(x|e)}=\frac{\partial}{\partial e}\frac{f_{e}(x|e)}{f(x|e)}+\bigg(\frac{f_{e}(x|e)}{f(x|e)}\bigg)^{2},
\]
we can write (\ref{eq: FOC Simple}) as 
\begin{align}
\frac{\partial\mathcal{L}(e,\pi,\eta)}{\partial\pi(\tilde{x})} & =f(\tilde{x}|e)\bigg[A_{1}(e,\pi,\eta)-A_{2}(e,\pi,\eta)\frac{f_{e}(\tilde{x}|e)}{f(\tilde{x}|e)}+\nonumber \\
 & \quad\quad\quad\quad\quad A_{3}(e,\pi,\eta)\bigg(\frac{f_{e}(\tilde{x}|e)}{f(\tilde{x}|e)}\bigg)^{2}+A_{3}(e,\pi,\eta)\frac{\partial}{\partial e}\frac{f_{e}(\tilde{x}|e)}{f(\tilde{x}|e)}\bigg].\label{eq: Prop 3}
\end{align}
Since $\frac{f_{e}(x|e)}{f(x|e)}$ is monotone, the function in (\ref{eq: Prop 3})
inherits the curvature of $\frac{\partial}{\partial e}\frac{f_{e}(x|e)}{f(x|e)}$.
That is, $\frac{1}{f(x|e)}\frac{\partial\mathcal{L}(e,\pi,\eta)}{\partial\pi(x)}$
is a convex or concave function of $\frac{f_{e}(x|e)}{f(x|e)}$ depending
on the sign of $A_{3}(e,\pi,\eta)$. As a result,  the sign of $\frac{\partial\mathcal{L}(e,\pi,\eta)}{\partial\pi(x)}$
changes at most twice over the interval $X$.

Note that if $\frac{\partial\mathcal{L}(e,\pi,\eta)}{\partial\pi(x)}$
is always positive or always negative (for a given effort level $e$),
then the information structure is fully uninformative. Such information
structures cannot be incentive compatible as they provide no incentives
for the agent to conduct any effort level $e>e_{1}$.  Consequently,
the equilibrium information structure has either one or two thresholds.
In either case, it follows immediately from (\ref{eq: FOC pi}) that
if the equilibrium information structure has a single threshold, say
$x^{*}$, then $\pi(x)=1$ if and only if $x\geq x^{*}$. If the equilibrium
information structure has two thresholds, say $(x_{1}^{*},x_{2}^{*})$,
then $\pi(x)=1$ if and only if $x\in[x_{1}^{*},x_{2}^{*}]$.
\end{proof}

\end{document}